\def\baselinestretch{1.1}
\theoremstyle{plain}
\newtheorem{theorem}{Theorem}[section]
\newtheorem{corollary}[theorem]{Corollary}
\newtheorem{proposition}[theorem]{Proposition}
\newtheorem{lemma}[theorem]{Lemma}
\theoremstyle{definition}
\newtheorem{definition}[theorem]{Definition}
\newtheorem{remark}[theorem]{Remark}
\newtheorem{example}[theorem]{Example}
\def\beq{\begin{equation}}
\def\eeq{\end{equation}}
\def\bea{\begin{eqnarray}}
\def\eea{\end{eqnarray}}
\def\beann{\begin{eqnarray*}}
\def\eeann{\end{eqnarray*}}
\def\ben{\begin{enumerate}}
\def\een{\end{enumerate}}
\def\bit{\begin{itemize}}
\def\eit{\end{itemize}}
\def\derpar#1#2{\frac{\partial{#1}}{\partial{#2}}}
\newcommand\restr[2]{{% we make the whole thing an ordinary symbol
  \left.\kern-\nulldelimiterspace % automatically resize the bar with \right
  #1 % the function
  %\vphantom{\big|} % pretend it's a little taller at normal size
  \right|_{#2} % this is the delimiter
}}
\newcommand{\R}{\mathbb{R}}
\renewcommand{\d}{\mathrm{d}}
\newcommand{\Cinfty}{\mathscr{C}^\infty}
\newcommand{\Tan}{\mathrm{T}}
\newcommand{\T}{\mathrm{T}}
\newcommand{\Lie}{\mathscr{L}}
\newcommand{\X}{\mathfrak{X}}
\newcommand{\bfX}{\mathbf{X}}
\newcommand{\pr}{\mathrm{pr}}
 \DeclareMathOperator{\Div}{Div}
\def\d{\mathrm{d}}
\newcommand{\parder}[2]{\frac{\partial #1}{\partial #2}}
\newcommand{\dparder}[2]{\dfrac{\partial #1}{\partial #2}}
\newcommand{\parderr}[3]{\frac{\partial^2 #1}{\partial #2\partial #3}}
\def\to{\rightarrow}
\def\tol{\longrightarrow}
\def\tkq{\oplus^k\T Q}
\def\r{\R}
\let\ds\displaystyle
 \def\bcr{\begin{color}{red}}
\def\bcb{\begin{color}{blue}}
\def\enc{\end{color}}
\title{\sf
Some contributions to $k$-contact Lagrangian field equations, symmetries and dissipation laws
}
\author{\sffamily 
Xavier Rivas$^{a,}$\thanks{xavier.rivas@unir.net\quad ORCID: 0000-0002-4175-5157}\, , 
Modesto Salgado$^{b,c,}$\thanks{modesto.salgado@usc.es\quad ORCID: 0000-0003-3982-1845}\,   
and 
Silvia Souto$^{b,c,}$\thanks{silviasouto.perez@usc.es\quad ORCID: 0000-0003-0755-1211}
\\[1ex]
\\[0.1ex]
\normalsize\itshape\sffamily 
$^a$Escuela Superior de Ingeniería y Tecnología,
\\
\normalsize\itshape\sffamily
Universidad Internacional de La Rioja,
%\\ \normalsize\itshape\sffamily
Logroño, Spain. 
\\[0.1ex]
\normalsize\itshape\sffamily 
$^b$Departamento de Xeometría e Topoloxía, Universidade de Santiago de Compostela,
\\\normalsize\itshape\sffamily 
15782 Santiago de Compostela, Spain.
\\[0.1ex]
\normalsize\itshape\sffamily
$^c$Centro de Investigación e Tecnoloxía Matemática de Galicia (CITMAga),
\\ \normalsize\itshape\sffamily 
15782 Santiago de Compostela, Spain.
}
\begin{document}

\date{\today}

\maketitle

\begin{abstract}

It is well known that $k$-contact geometry is a suitable framework to deal with non-conservative field theories. In this paper, we study some relations between solutions of the $k$-contact Euler--Lagrange equations, symmetries, dissipation laws and Newtonoid vector fields. We review the $k$-contact Euler--Lagrange equations written in terms of $k$-vector fields and sections and provide new results relating the solutions in both approaches. We also study different kind of symmetries depending on the structures they preserve: natural (preserving the Lagrangian function), dynamical (preserving the solutions), and $k$-contact (preserving the underlying geometric structures) symmetries. For some of these symmetries, we provide Noether-like theorems relating symmetries and dissipation laws. We also analyse the relation between $k$-contact symmetries and Newtonoid vector fields. Throughout the paper, we will use the damped vibrating string as our main illustrative example.

\end{abstract}
%%%%%%%%%%%%%%%%%%%%%%%%%%%%%%%%%%%%%%%%%%%%%%%%%%%%%%%%%%%%%%%%

\noindent\textbf{Keywords:} dissipation theorem, symmetries, Lagrangian field theories, dissipation law, $k$-contact structure.

\noindent\textbf{MSC\,2020 codes:} 70S05, 70S10, 70G45, 53C15, 53D10, 35R01

%%%%%%%%%%%%%%%%%%%%%%%%%%%%%%%%%%%%%

\pagestyle{myheadings}
\markright{\small\sf 
{\it X. Rivas, M. Salgado \textit{\&} S. Souto} ---
$k$-contact Lagrangian equations, symmetries and dissipation laws}% in the Lagrangian $k$-contact formalism}

\newpage

{\setcounter{tocdepth}{2}
\def\baselinestretch{1}
\small
% hack per a eliminar l'espai vertical 1em
\def\addvspace#1{\vskip 1pt}
\parskip 0pt plus 0.1mm
\tableofcontents
}

%%%%%%%%%%%%%%%%%%%%%%%%%%%%%%%%%%%%%%%%%%%%%%%%%%%%%%%%%%%%%%%%

\section{Introduction}

Since the 1950s, geometric methods have been used to provide descriptions of mechanical systems and field theories with many applications in mathematics, physics, engineering, etc. Some of the most frequent geometric structures involved in geometric mechanics and field theory are symplectic, $k$-symplectic or multisymplectic manifolds (see \cite{Car1991,DeLeon2015,relations} and references therein). In general, all these geometric methods are applied to conservative systems, that is, without any dissipation or loss of energy, both in the Lagrangian and Hamiltonian sides.

In the last decade, the interest in the geometrization of systems with dissipation of energy has risen drastically. Contact geometry \cite{Ban2016,Gei2008,Kho2013} is the suitable geometric framework to describe many types of damping \cite{Bra2017a,Bra2017b,Car2019,DeLeo2019b,Gaset2020,Liu2018}. This formulation has proved to be very useful in thermodynamics \cite{Bra2018,Sim2020}, quantum mechanics \cite{Cia2018}, circuit theory \cite{Got2016}, Lie systems \cite{LR-2022} and control theory \cite{Ram2017} among others \cite{Bra2020,deLeon2022,DeLeo2021b,LLLR-2022,GG22,GG22b}. Also, there have been several generalizations of contact geometry in order to describe non-conservative field theory, namely the $k$-contact \cite{Gaset2021b,Gaset2021,Gra2021}, $k$-cocontact \cite{Riv-2022} and multicontact \cite{LGMRR-2022} formulations. The field equations obtained by means of these formalisms, called the {\it  $k$-contact Euler--Lagrange equations}, coincide with the ones obtained from the recently devised generalized Herglotz principle \cite{Gaset2022}.

The study of symmetries of dynamical systems is of great interest because it can provide new ways of finding conservation (or dissipation) laws. In addition, reduction procedures can be performed in order to simplify the description of a system whose group of symmetries is known. Since the seminal work by E. Noether \cite{Noether1971} (see also \cite{Kosmann-Schwarzbach2011,Neeman1999}), the relation between symmetries and conserved quantities has been a topic of great relevance in mathematical physics and dynamical systems. Since the dawn of geometric mechanics, many papers have been devoted to the study of symmetries and conserved quantities of Hamiltonian and Lagrangian systems \cite{Carinena1989,Djukic1975,marmo86,Prince1985,vanderSchaft1983}. Recently, this study has been performed for contact and cocontact systems \cite{LML,Gaset2020,GLR-2023}, where the notion of conserved quantity has to be replaced with the notion of dissipated quantity.

The aim of this paper is to deepen in the study of the symmetries of non-conservative autonomous field theories using the $k$-contact formalism. There exist many types of symmetries depending on the structures they preserve. In this work, we are focused in natural symmetries (symmetries of the Lagrangian function), dynamical symmetries (those preserving the solutions) and $k$-contact symmetries (those preserving the underlying geometric structures). Some of these symmetries allow us to obtain dissipation laws following the ideas of E. Noether.  Throughout the work, several examples are used, in particular the vibrating string with damping.

The structure of the paper is as follows. Section \ref{sec:geometric-elements} is devoted to review some basic concepts on $k$-vector fields, integral sections and \textsc{sopde}s which are fundamental tools in this paper.
Roughly speaking, a \textsc{sopde} is a $k$-vector field whose integral sections are first prolongations of maps defined on the base manifold. In addition, some geometric structures in the tangent bundle of $k^1$-velocities of a manifold are introduced using the theory of lifts of functions and vector fields. These structures are necessary to develop the geometrical $k$-contact equations.
In Section \ref{section-k-contact-Euler-Lagrange} we introduce the $k$-contact Euler--Lagrange equations, or Herglotz--Euler--Lagrange equations, and several examples of systems of this form are provided. 
Section \ref{gemetricformalism} is devoted to describe how to obtain these equations geometrically via Poincaré--Cartan forms. We present an example of the geometric $k$-contact Lagrangian equations.
In Section \ref{relations} we discuss the relation between solutions of the $k$-contact Euler--Lagrange  equations and the contact Lagrangian $k$-vector fields, provided by the geometrical $k$-contact equations.

In Section \ref{dlcs} the notion of dissipation law is introduced and some examples are provided. The characterization of these laws in Lemma \ref{lem:12} is a new tool for the study of dissipation law at the rest of the paper. 
Section \ref{section:symmetries} is devoted to present several types of symmetry, depending on the structure they preserve. Along this section, several Noether-like theorems relating symmetries and dissipation laws are provided.
Finally, we generalize the notion of Newtonoid vector field, see \cite{BBS}, from $k$-symplectic geometry to the $k$-contact case and we give some results relating Newtonoid vector fields to $k$-contact symmetries.

Throughout the paper, all the manifolds are real, second countable and of class $\Cinfty$, and the mappings are assumed to be smooth.
Einstein's notation for sums over crossed repeated indices is hereafter assumed.

\section{Preliminaries}\label{sec:geometric-elements}

The notion of $k$-vector field is of great interest in the geometric study of partial differential equations. 
In this section, we review the main concepts on $k$-vector fields and integral sections. In addition, we give some insights on the natural structures of the bundle of $k^1$-velocities: the Liouville vector field and the canonical $k$-tangent structure. These structures allow us to define the notion of second-order partial differential equation. For more information, we refer to \cite{DeLeon2015} and references therein.

\subsection{\texorpdfstring{$k$}--vector fields and integral sections}

Let $M$ be a smooth $n$-dimensional manifold. Consider the Whitney sum of $k$ copies of its tangent bundle: $\oplus^k\T M = \T M\oplus\overset{k}{\dotsb}\oplus\T M$. We have the natural projections
\begin{equation*}
    \tau^k_M\colon \oplus^k\T M\tol M\,, \qquad\tau^{k,\alpha}_M\colon\oplus^k\T M\tol \T M\,, \qquad \alpha = 1,\dotsc,k\,.
\end{equation*}
A \textbf{$k$-vector field} on a manifold $M$ is a section $ {\bf X}\colon M\to\oplus^k\T M$ of the natural projection $\tau^k_M$. We will denote by $\X^k(M)$ the set of all $k$-vector fields on $M$. Thus, a $k$-vector field ${\bf X}\in\X^k(M)$ can be understood as a family of $k$ vector fields $X_1,\dotsc,X_k\in\X(M)$, given by $X_\alpha = \tau^{k,\alpha}_M\circ{\bf X}$. With this in mind, we can denote ${\bf X} = (X_1,\dotsc,X_k)$. A smooth map $F:M \to N$ induces the map $\oplus^k\T F:\oplus^k\T M \to  \oplus^k\T N$ given by
\begin{equation}\label{tf}
    \oplus^k\T_m F({v_1}_m,\dotsc,{v_k}_m)=(\T_mF({v_1}_m),\dotsc,\T_mF({v_k}_m)) \,,
\end{equation}
where $m\in M$, ${v_1}_m,\dotsc,{v_k}_m\in \T_mM$, and $\T_mF:\T_mM \to \T_{F(m)}N$ is the tangent map of $F$.

Given a map $\psi\colon U\subset\R^k\to M$, we define its \textbf{first prolongation} to $\oplus^k\T M$ as the map $\psi^{(1)}\colon U\subset \R^k \to \oplus^k\T M$ given by
\begin{equation}\label{firstpro}
  \psi^{(1)}(t)= \oplus^k\T_t \psi\left(\ds\frac{\partial}{\partial t^1}\bigg\vert_t,\dotsc,\ds\frac{\partial}{\partial t^k}\bigg\vert_t\right) = \left(\T_t\psi \left(\ds\frac{\partial}{\partial t^1}\bigg\vert_t\right),\dotsc, \T_t\psi\left(\ds\frac{\partial}{\partial t^k}\bigg\vert_t\right)\right)\,,    
\end{equation}
where $t = (t^1,\dotsc,t^k)$ are the canonical coordinates of $\R^k$.

In the same way as one has integral curves of vector fields, the notion of integral section of a $k$-vector field is defined as follows. Consider a $k$-vector field ${\bf X} = (X_1,\dotsc,X_k)\in\X^k(M)$. An \textbf{integral section} of ${\bf X}$ is a map $\psi\colon U\subset\R^k\to M$ such that $ \psi^{(1)} = {\bf X}\circ\psi$, namely $\T_t\psi\left({\parder{}{t^\alpha}}\big\vert_{t}\right) = (X_\alpha\circ\psi)(t)$ for every $\alpha=1,\dotsc,k$. We say that a $k$-vector field ${\bf X}\in\X^k(M)$ is \textbf{integrable} if every point of $M$ is in the image of an integral section of ${\bf X}$.

Consider a $k$-vector field ${\bf X} = (X_\alpha)$ with local expression
$ X_\alpha = X_\alpha^i\derpar{}{x^i}\, $
in a coordinate system $(x^i)$ on $U\subset M$.
Then, $\psi\colon U\subset\R^k\to M$ is an integral section of ${\bf X}$ if, and only if, it is a solution of the system of partial differential equations
$$ \derpar{\psi^i}{t^\alpha}\Big\vert_t = X_\alpha^i(\psi(t))\,. $$

A $k$-vector field ${\bf X} = (X_1,\dotsc,X_k)$ on $M$ is integrable if, and only if, $[X_\alpha,X_\beta] = 0$ for every $\alpha,\beta$, which are precisely the necessary and sufficient conditions for the integrability of the above system of partial differential equations \cite{Lee2013}. Consider a diffeomorphism $\Phi: M \to M$ and a $k$-vector field $\mathbf{X} = (X_1,\dotsc,X_k)$ on $M$. If $\psi$ is an integral map of $\bfX$, then $\Phi\circ \psi$  is an integral section of $\Phi_*\mathbf{X}=(\Phi_*X_\alpha)$. In particular, if $\mathbf{X}$ is integrable
then so is $\Phi_*\mathbf{X}$.

%%%%%%%%%%%%%%%%%%%%%%%%%%%%%%%%%%%%%%%%%%%%%%%%%%%%%%%%%%%%%%%%\te 

\subsection{The tangent bundle of \texorpdfstring{$k^1$}--velocities}

Let $\tau_Q : \T Q \to Q$ be the tangent bundle of a smooth manifold $Q$.
The vector bundle $\oplus^k \T Q$ is called {\bf the tangent bundle of $k^1$-velocities of
$Q$} (see \cite{DeLeon2015}). 

If $(q^i)$ are local coordinates on $U \subseteq Q$, the natural coordinates $(q^i , v^i)$ on $\T U=\tau_Q^{-1}(U)$, namely
$$
 q^i(v_q)=q^i(q)\,,\qquad
  v^i(v_q)=v_q(q^i) \,,
  $$
induce natural coordinates $(q^i , v_\alpha^i)$, with $i = 1,\dotsc,n$ and $\alpha = 1,\dotsc,k$, on $(\tau^k_Q)^{-1}(U)$ which are given by
$$ q^i({v_1}_q,\ldots , {v_k}_q)=q^i(q)\,,\qquad
  v_\alpha^i({v_1}_q,\ldots , {v_k}_q)={v_\alpha}_q(q^i) \, .$$

 \paragraph{Lifts of functions.}
 
If $f$ is a differentiable function on $Q$, the vertical lift $f^V$ and the $\alpha$-lift $f^{(\alpha)}$ of $f$, are the functions on $\oplus^k \T Q$ given by
$$f^V(v_q)=(\tau_Q^k)^*f(v_q)=f(q)
\,, \qquad f^{(\alpha)}(v_q)=v_{\alpha_q}(f)\equiv v^i_\alpha(v_q)\ds\derpar{f}{q^i}\Big\vert_q\, , \quad \text{for every} \,\,\, \alpha=1,\ldots,k\,.$$

Since $(q^i)^V=q^i$ and $(q^i)^{(\alpha)}=v^i_\alpha$, we deduce that vector fields on $\oplus^k \T Q$ are characterized by its action on vertical and $\alpha$-lifts of functions.

\paragraph{Lifts of vector fields.} Given a vector field $X\in\X(Q)$, the vertical $\alpha$-lifts $X^{V_\alpha}$, and the complete lift $X^C$ are the vector fields on $\oplus^k \T Q$ given by
\begin{equation}\label{VClifts}
\begin{array}{lcl}  
X^{V_\alpha}(f^V)=0, & &X^{V_\alpha}(f^{(\beta)})=\delta^\alpha_\beta (X(f))^V ,
\\ \noalign{\medskip}
X^C(f^V)=(X(f))^V, & & X^C(f^{(\alpha)})=(X(f))^{(\alpha)}  \, ,
 \end{array}
\end{equation}
with $\alpha=1,\ldots,k .$

Taking adapted coordinates $(q^i, v_\alpha^i)$ on $\oplus^k\T Q$, if $X=X^i \ds\parder{}{q^i}$, then
\begin{equation}\label{lifts-vectors1} 
X^{V_\alpha} =X^i \derpar{}{v^i_\alpha} \,,
\qquad 
X^C=X^i \derpar{}{q^i}+ v^j_\alpha\derpar{X^i}{q^j}\derpar{}{v^i_\alpha} \,,
 \end{equation} 
and we have
\begin{equation}\label{lifts-vectors2} 
\left(\derpar{}{q^i}\right)^{V_\alpha} =   \derpar{}{v^i_\alpha} \,,
\qquad 
\left(\derpar{}{q^i}\right)^C=  \derpar{}{q^i}\,  .\end{equation}

As a consequence of \eqref{lifts-vectors1} and \eqref{lifts-vectors2} we deduce that the tensor fields of type $(1,1)$ on $\oplus^k \T Q$ are characterized by the action on these lifts of vector fields.

The \textbf{Liouville vector field} $\Delta\in\X(\oplus^k\T Q)$ is the vector field defined by $\Delta(f^V)=0$ and $\Delta (f^{(\alpha)})=f^{(\alpha)}$. In adapted coordinates, it reads
\begin{equation}\label{delta1}
    \Delta = v^i_\alpha \derpar{}{v^i_\alpha}\,.
\end{equation}
 
\begin{remark}\
\begin{enumerate}[{\rm (1)}]
\item If $X$ is a vector field on $Q$ with local one-parameter group $\phi_t: Q \to Q$, then $X^C$ is the infinitesimal generator of the flow $\T\phi_t : \tkq \to \tkq$.

\item $\Delta$ is the infinitesimal generator of the flow 
$\phi : \R \times \tkq \to \tkq \,,\, \phi(t,(v_{1_q},\ldots,v_{k_q})) = (e^t v_{1_q}, \ldots,e^t v_{k_q}) $.
\end{enumerate}\end{remark}

The \textbf{canonical $k$-tangent structure} is the family $(J^1,\ldots, J^k)$ of tensor fields
of type $(1,1)$ defined by
$J^\alpha(X^C)=X^{V_\alpha}$ and $J^\alpha(X^{V_\alpha} )= 0 \, .$
In local adapted coordinates it is written as
\begin{equation}\label{localJ}
J^\alpha=\displaystyle\frac {\displaystyle\partial}{\displaystyle\partial v^i_\alpha} \otimes \d q^i \,
.
\end{equation}

\subsection{Second-order partial differential equations}\label{kvf}
%%%%%%%%%%%%%%%%%%%%%%%%%

Since $\oplus^k \Tan Q\times\R^k\to\oplus^k \Tan Q$ is a trivial vector bundle, 
the canonical structures in $\oplus^k \Tan Q$ (the 
canonical $k$-tangent structure and the
Liouville vector field described above)
can be extended to $\oplus^k \Tan Q\times\R^k$ in a natural way,
and are denoted with the same notation ($J^\alpha$ and $\Delta$).
\begin{definition}\label{dfn:k-contact-holonomic-section}
    Consider a map $\phi_s\colon U \subset\R^k\to Q\times\R^k$ with
    $ \phi_s(t)= (\phi(t),s^\alpha(t))\,, $
    where $\phi\colon U \subset \R^k\to Q$. The \textbf{first prolongation} of $\phi_s$ to $\oplus^k\T Q \times\R^k$ is the map $\phi_s^{(1)}\colon U \subset \R^k\to \oplus^k\T Q \times\R^k$ given by
    $$ \phi_s^{(1)}(t)= (\phi^{(1)}(t),s^\alpha(t))\,, $$
    where
    $$\phi^{(1)}(t) = \left(\T_t\phi\left(\ds\frac{\partial}{\partial
t^1} \Big\vert_t\right), \ldots ,
\T_t\phi\left(\ds\frac{\partial}{\partial t^k}\Big\vert_t \right)
\right)$$
    is the first prolongation of $\phi$ to $\oplus^k\T Q$ defined in \eqref{firstpro}, and $(t^1,\ldots,t^k)$ are the canonical coordinates of $\r^k$.
\end{definition}

In local coordinates, if $\phi(t)=(\phi^i(t))$, then we have
\begin{equation}\label{phis1}
    \phi_s^{(1)}(t) = \left( \phi^i (t), \frac{\partial\phi^i}{\partial t^\alpha} \Big\vert_t,s^\alpha(t)\right).
\end{equation}
 
\begin{definition}\label{xijso}
    A $k$-vector field $\mathbf{\Gamma}=(\Gamma_1,\ldots,\Gamma_k)$ on $    \oplus^k\T Q \times\R^k$ is a \textbf{second-order partial differential equation} (or a {\sc sopde}) if, and only if, $J^{\alpha}(\Gamma_{\alpha})=\Delta$.
\end{definition}
In local adapted coordinates, a {\sc sopde} $\mathbf{\Gamma}=(\Gamma_1,\ldots,\Gamma_k) $ is given by
\begin{equation}\label{localsode1}
    \Gamma_\alpha = v^i_\alpha\parder{}{q^i} + \Gamma_{\alpha \beta}^i \parder{}{v^i_\beta} + \Gamma_\alpha^\beta \parder{}{s^\beta}\,,\qquad  \alpha,\beta = 1,\dotsc,k\,,
\end{equation}
where $\Gamma_{\alpha\beta}^i$ and $\Gamma_\alpha^\beta$ are smooth functions defined on domains of induced charts on $\oplus^k\T Q\times\R^k$.

If $\psi\colon U\subset \R^k \to \oplus^k\T Q\times\R^k$, locally given by
$\psi(t)=(\psi^i(t),\psi_\alpha^i(t),s^\alpha(t))$, is an integral section of a {\sc sopde} $\mathbf{\Gamma}=(\Gamma_1,\ldots,\Gamma_k)$, from Definition \ref{dfn:k-contact-holonomic-section} and equation \eqref{localsode1}, it follows that
\begin{equation}\label{solsopde}
    \parder{\psi^i}{t^\alpha}\Big\vert_t = \psi^i_\alpha(t)\,,\qquad \parder{\psi^i_\beta}{t^\alpha}\Big\vert_t = \Gamma_{\alpha \beta}^i(\psi(t))\,,\qquad \parder{s^\beta}{t^\alpha}\Big\vert_t = \Gamma_\alpha^\beta(\psi(t))\,.
\end{equation}
Then, we have
$$
    \psi(t) = (\psi^i(t),\psi_\alpha^i(t),s^\alpha(t)) = \phi_s^{(1)}(t)=   (\phi^{(1)}(t),s^\alpha(t))\,, \qquad \parder{s^\beta}{t^\alpha}\Big\vert_t = \Gamma_\alpha^\beta(\phi_s^{(1)})\,,
$$
where $\phi = \pr_Q\circ\psi\colon U\subset \R^k \overset{\psi}{\tol} \oplus^k\T Q\times\R^k \overset{\pr_Q}{\tol} Q$.
    
Thus we obtain the
following characterization for the integral maps of a {\sc sopde}.
\begin{proposition}
    Let $\mathbf{\Gamma}=(\Gamma_1,\ldots,\Gamma_k)$ be an integrable
    {\sc sopde} on $\oplus^k\T Q\times\R^k$. If $\psi : U \subset \r^k \to \tkq \times \r^k$ is an integral section of $\mathbf{\Gamma}$,
    then
    $$ \psi(t)=(\phi^{(1)}(t), s^\alpha(t))=\phi_s^{(1)}(t)\,, $$ 
    and $\phi_s(t)=(\phi^i(t),s^\alpha(t))$ is a solution of the following system of second-order partial
    differential equations
    \begin{equation}\label{nn1}
        \parder{^2 \phi^i}{t^\alpha\partial t^\beta}\Big\vert_t = \Gamma_{\alpha \beta}^i\left(\phi^j(t), \frac{\partial\phi^j}{\partial t^\gamma}\Big\vert_t,s^\gamma(t)\right)\,, \qquad \parder{s^\beta}{t^\alpha}\Big\vert_t = \Gamma_\alpha^\beta\left(\phi^j(t), \frac{\partial\phi^j}{\partial t^\gamma}\Big\vert_t,s^\gamma(t)\right)\,.
    \end{equation}
    Conversely, if $\phi_s\colon U\subset \R^k \to Q\times \R^k$ is a map satisfying the system \eqref{nn1}, then 
    $\phi_s^{(1)}$
    is an integral section of
    $\mathbf{\Gamma}=(\Gamma_1,\ldots,\Gamma_k)$.
\end{proposition}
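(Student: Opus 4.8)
The plan is to prove the proposition by combining Definition \ref{xijso} (the SOPDE condition) with the local characterization of integral sections already derived in equation \eqref{solsopde}, and then translating between the prolongation $\phi_s^{(1)}$ and the system \eqref{nn1}. The argument is essentially a bookkeeping exercise in local coordinates, so the main content is to verify that the two conditions match up and that the holonomy structure (integral section = first prolongation) follows automatically from the SOPDE condition. Let me lay out the steps.

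First I would establish the forward direction. Given an integral section $\psi(t)=(\psi^i(t),\psi^i_\alpha(t),s^\alpha(t))$ of the integrable SOPDE $\mathbf{\Gamma}$, I would invoke the three relations in \eqref{solsopde}, which hold by definition of integral section together with the local form \eqref{localsode1}. The first relation $\parder{\psi^i}{t^\alpha}=\psi^i_\alpha$ says exactly that the velocity components $\psi^i_\alpha$ are the partial derivatives of the base components $\psi^i$; writing $\phi^i:=\psi^i$, this is precisely the statement $\psi=\phi_s^{(1)}$ via the coordinate expression \eqref{phis1}. Differentiating this first relation with respect to $t^\beta$ and substituting the second relation of \eqref{solsopde} then yields $\parder{^2\phi^i}{t^\alpha\partial t^\beta}=\Gamma^i_{\alpha\beta}(\psi(t))$, and rewriting $\psi(t)=\phi_s^{(1)}(t)$ in the argument gives the first equation of \eqref{nn1}. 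The third relation of \eqref{solsopde} is, after the same substitution, the second equation of \eqref{nn1}.

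For the converse, I would start from a map $\phi_s(t)=(\phi^i(t),s^\alpha(t))$ satisfying \eqref{nn1} and simply verify directly that its prolongation $\phi_s^{(1)}$, whose coordinates are given by \eqref{phis1}, satisfies the three equations \eqref{solsopde}, which by Definition \ref{dfn:k-contact-holonomic-section} and the local form \eqref{localsode1} is equivalent to being an integral section of $\mathbf{\Gamma}$. The first of \eqref{solsopde} is immediate from the very definition of the prolongation, the second follows from the first equation of \eqref{nn1} after identifying $\psi^i_\beta=\parder{\phi^i}{t^\beta}$, and the third is the second equation of \eqref{nn1}.

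The only genuinely nontrivial point, and the one I would flag as the main obstacle, is the role of the hypothesis that $\mathbf{\Gamma}$ is \emph{integrable}: it guarantees the existence of integral sections through every point and ensures the compatibility (symmetry in $\alpha,\beta$ of the mixed partials $\Gamma^i_{\alpha\beta}$) needed for the system \eqref{nn1} to be well-posed, so that the second-order system genuinely admits solutions. I would note that integrability is already characterized by $[\Gamma_\alpha,\Gamma_\beta]=0$ in the preliminaries, so the existence half requires no further work; the coordinate computation above is purely algebraic once a section is assumed to exist.
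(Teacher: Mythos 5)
Your proposal is correct and follows essentially the same route as the paper: the paper derives the relations \eqref{solsopde} from Definition \ref{dfn:k-contact-holonomic-section} and the local form \eqref{localsode1}, reads off $\psi=\phi_s^{(1)}$ from the first of these, and obtains \eqref{nn1} by the same substitution/differentiation you describe, with the converse being the same direct verification. Your closing remark that integrability only enters to guarantee existence of integral sections (the coordinate argument itself being purely algebraic) is accurate and consistent with how the paper uses the hypothesis.
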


  \begin{definition} A map  $\phi_s\colon U\subset \R^k \to Q\times \R^k$ satisfying \eqref{nn1} is called a \textbf{solution} of the {\sc sopde} vector field $\mathbf{\Gamma}$.
  \end{definition}

%%%%%%%%%%%%%%%%%%%%%%%%%%%%%%%%%%%%%%%%%%%%%%%%%%%%%%%%%%%%%%%%
\section{\texorpdfstring{$k$}--contact Euler--Lagrange equations}\label{section-k-contact-Euler-Lagrange}
%%%%%%%%%%%%%%%%%%%%%%%%%%%%%%%%%%%%%%%%%%%%%%%%%%%%%%%%%%%%%%%%

In this section we introduce the $k$-contact Euler--Lagrange equations, also called Herglotz--Euler--Lagrange equations, and several examples of systems of this form are provided.

The $k$-contact Euler--Lagrange equations for a Lagrangian function $L:\oplus^k\T Q\times\R^k\to\R$ read 

\begin{equation}\label{eq:k-contact-Euler--Lagrange-equations}
    \parder{}{t^\alpha} \left(\parder{L}{v^i_\alpha} \circ \phi_s^{(1)}\right) - \parder{L}{q^i} \circ \phi_s^{(1)} = \left(\parder{L}{s^\alpha} \parder{L}{v^i_\alpha}\right) \circ \phi_s^{(1)}\,, \qquad \parder{s^\alpha}{t^\alpha} = L \circ \phi_s^{(1)}\,,
\end{equation}
for a map $\phi_s\colon U \subset \R^k\to Q\times\R^k, \,\, \phi_s(t)= (\phi(t),s^\alpha(t))$, where $\phi\colon U \subset \R^k\to Q$. Equations \eqref{eq:k-contact-Euler--Lagrange-equations} can be written as
\begin{equation}\label{eq:eulerlagrange2} 
    \parderr{L}{v_\alpha^i}{v_\beta^j} \parderr{\phi^j}{t^\alpha}{t^\beta} + \parderr{L}{q^j}{v^i_\alpha} \parder{\phi^j}{t^\alpha} + \parderr{L}{s^\beta}{v^i_\alpha} \parder{s^\beta}{t^\alpha} - \parder{L}{q^i} = \parder{L}{s^\alpha}\parder{L}{v^i_\alpha}\,,\qquad \parder{s^\alpha}{t^\alpha} = L\,,
\end{equation}
where $\phi_s^{(1)}(t) = \left(\phi^j(t), \dfrac{\partial\phi^j}{\partial t^\alpha}\bigg\vert_t,s^\alpha(t)\right)$, with $t\in\R^k$, which represents a system of second-order partial differential equations on $Q \times \R^k$. It is important to point out that these equations can be obtained from two different variational principles \cite{Gaset2022} generalizing the usual Herglotz variational principle \cite{Her1930} and the variational principle for the $k$-symplectic Euler--Lagrange equations \cite[Sec.\,6.2]{DeLeon2015}. Throughout the paper, we will call solutions of equations \eqref{eq:k-contact-Euler--Lagrange-equations} to the maps $\phi_s$ or $\phi_s^{(1)}$ indistinctly, since we are only considering solutions which are first prolongations of maps $\phi_s:U\subset\R^k\to Q\times\R^k$.

\begin{remark}
In the case $k=1$, equations \eqref{eq:k-contact-Euler--Lagrange-equations} are Herglotz's equations (see \cite{Her1930})
$$
    \frac{\d}{\d t} \left(\parder{L}{v^i} \circ \gamma\right) - \parder{L}{q^i} \circ \gamma= \left(\parder{L}{s} \parder{L}{v^i}\right) \circ \gamma\,, \qquad \frac{\d s}{\d t} = L \circ \gamma\,,
$$
for a Lagrangian function $L:\T Q\times \R \longrightarrow \R$, 
with solution a curve  $\gamma(t)=(q^i(t),\dot q^i(t),s(t))$ on $\T Q\times \r$.
\end{remark}

\begin{remark}
    Note that when the Lagrangian function $L$ does not depend on the variables $s^\alpha$, for all $\alpha=1,\ldots,k$, then the above equations are the Euler--Lagrange field equations for an autonomous Lagrangian $L(q^i,v^i_\alpha)$ defined on $\oplus^k\T Q$.
\end{remark}

Let us see some examples of systems modelled by second-order partial differential equations which can be described by means of $k$-contact systems (see \cite{Gaset2021b}).

\begin{example}\label{exmpl:damped-vibrating-string} {\bf (The damped vibrating string).} 
    It is well known that a vibrating string can be described using the $k$-symplectic Lagrangian formalism. Consider the coordinates $(t,x)$ for the time and the space. Denote by $\phi$ the separation of a point in the string from its equilibrium point, and hence $v_t$ and $v_x$ will denote the derivative of $\phi$ with respect to the two independent variables. The Lagrangian function $L_\circ\colon \oplus^2\T\R\to\R$ for this system is given by
    \begin{equation}\label{eq:Lagrangian-vibrating-string}
        L_\circ(q,v_t,v_x) = \frac{1}{2}\rho v_t^2 - \frac{1}{2}\tau v_x^2\,,
    \end{equation}
    where $\rho$ is the linear mass density of the string and $\tau$ is the tension of the string. We assume that these quantities are constant. The Euler--Lagrange equation for this Lagrangian density is
    $$ \parder{^2\phi}{t^2} = c^2 \parder{^2\phi}{x^2}\,, $$
    where $c^2 = \dfrac{\tau}{\rho}$, which is the one-dimensional wave equation.
    
    In order to model a vibrating string with linear damping, we modify the Lagrangian function \eqref{eq:Lagrangian-vibrating-string} so that it becomes a $k$-contact Lagrangian function \cite{Gaset2021}. The new Lagrangian function $L$ is defined in the phase bundle $\oplus^2 \T \R\times \R^2$, endowed with coordinates $(q, v_t, v_x, s^t, s^x)$, as
    \begin{equation}\label{eq:Lagrangian-vibrating-string-dis}
    	L(q, v_t, v_x, s^t, s^x) = L_\circ - \gamma s^t = \frac{1}{2}\rho v_t^2 - \frac{1}{2}\tau v_x^2 - \gamma s^t\,.
    \end{equation}
    Consider a solution $\phi_s\colon\R^2\to \R\times\R^2$ with 
  $$\phi_s(t,x)= (\phi(t,x),s^t(t,x),s^x(t,x)) ,$$
  where $\phi\colon\R^2\to\R$. The first prolongation of $\phi_s$ is the map $\phi_s^{(1)}\colon\R^2\to \oplus^2\T\R \times\R^2$ given by
    \begin{equation}
      \phi_s^{(1)}(t,x) = \left( \phi (t,x),\parder{\phi}{t}\Big\vert_{(t,x)}, \parder{\phi}{x}\Big\vert_{(t,x)}, s^t(t,x), s^x(t,x)\right)\,.
    \end{equation}
    The $k$-contact Euler--Lagrange equations \eqref{eq:eulerlagrange2}, for the Lagrangian $L$, become
    \begin{equation}\label{eq:damped-vibrating-string-edp}
    	\frac{\partial^2 \phi}{\partial t^2} - c^2	\frac{\partial^2 \phi}{\partial x^2} + \gamma\parder{\phi}{t} = 0\,,\qquad \parder{s^1}{t} + \parder{s^2}{x} = L\circ\phi_s^{(1)}\,.
    \end{equation}
    The first equation corresponds to a vibrating string with damping.
\end{example}

\begin{example}\label{exmpl:two-coupled-vibrating-strings} {\bf (Two coupled vibrating strings with damping).}
    Consider a system of two coupled strings with damping. The configuration manifold of the system is $Q = \R^2$ equipped with coordinates $(q^1,q^2)$, where each coordinate represent the displacement of each string. The Lagrangian phase bundle of this system is $M = \oplus^2\T\R^2\times\R^2$ endowed with natural coordinates $(q^1,q^2,v_1^t,v_2^t,v_1^x,v_2^x,s^t,s^x)$. Consider the Lagrangian function $L\colon \oplus^2\T\R^2\times\R^2\to\R$ given by
    $$ L(q^1,q^2,v_t^1,v_t^2,v_x^1,v_x^2,s^t,s^x) = \frac{1}{2}\left( (v_t^1)^2 + (v_t^2)^2 - (v_x^1)^2 - (v_x^2)^2 \right) - C(z) - \gamma s^t\,, $$
    where $\gamma$ is a friction coefficient and $C$ is a function that represents a coupling of the two strings that depends only on $z = \sqrt{(q^1)^2 + (q^2)^2}$. 
    
    The maps $\phi^{(1)}_s = \left(\phi^1,\phi^2,\parder{\phi^1}{t},\parder{\phi^1}{x},\parder{\phi^2}{t},\parder{\phi^2}{x},s^t,s^x\right)$  solution to the $k$-contact Euler--Lagrange equations \eqref{eq:k-contact-Euler--Lagrange-equations} satisfy the system of partial differential equations
    \begin{gather}
        \parder{^2\phi^1}{t^2} - \parder{^2 \phi^1}{x^2} + \gamma\parder{\phi^1}{t} + C'(z)\frac{\phi^1}{z} = 0\,,\\
        \parder{^2\phi^2}{t^2} - \parder{^2 \phi^2}{x^2} + \gamma\parder{\phi^2}{t} + C'(z)\frac{\phi^2}{z} = 0\,,\\
        \parder{s^t}{t} + \parder{s^x}{x} = L\circ\phi_s\,.
    \end{gather}
    The first two equations correspond to two damped coupled strings with coupling function $C$. The Hamiltonian formulation of this system was studied in \cite{Gaset2021}.
\end{example}

\begin{example}\label{exmpl:telegrapher} {\bf (The telegrapher's equation).}
    The current and voltage on a uniform electrical transmission line is described by the so-called {\it telegrapher's equations} \cite[p.\,306]{Hay2018}, \cite[p.\,653]{Sal2015}:
    \begin{equation*}
        \begin{dcases}
            \parder{V}{x} = -L\parder{I}{t} - RI\,,\\
            \parder{I}{x} = -C\parder{V}{t} - GV\,,
        \end{dcases}
    \end{equation*}
    where $V(x,t)$ is the voltage, $I(x,t)$ is the current, $L$ is the inductance, $R$ is the resistance, $C$ is the capacitance, and $G$ is the conductance. This system can be uncoupled, obtaining the system of second-order partial differential equations
    \begin{equation*}
        \begin{dcases}
            \parder{^2V}{x^2} = LC\parder{^2V}{t^2} + (LG + RC)\parder{V}{t} + RGV\,,\\
            \parder{^2I}{x^2} = LC\parder{^2I}{t^2} + (LG + RC)\parder{I}{t} + RGI\,.\\
        \end{dcases}
    \end{equation*}
    Note that both equations in the system above are identical, and are also known as telegrapher's equation. Both of them can be written as
    \begin{equation}\label{eq:telegraph-equation}
        \square q + \gamma\parder{q}{t} + m^2 q = 0\,,
    \end{equation}
    where $$\gamma = \dfrac{LG + RC}{LC}\,,\qquad m^2 = \dfrac{RG}{LC}\,,\qquad \square = \dparder{^2}{t^2} - \dfrac{1}{LC}\dparder{^2}{x^2}\,.$$ 
    
    Equation \eqref{eq:telegraph-equation} is the $k$-contact Euler--Lagrange equation (see \cite{Gra2021}) for the Lagrangian function $L:\oplus^2\T\R\times\R^2\to\R$ given by
    $$ L(q, v_t, v_x, s^t, s^x) = \frac{1}{2}v_t^2 - \frac{1}{2LC}v_x^2 - \frac{1}{2}m^2q^2 - \gamma s^t\,. $$
\end{example}

\begin{example}{\bf (Laplace's equation with damping).}
    It is well known that Laplace's equation is
    $$ \Delta \phi := \parder{^2\phi}{x_1^2} + \dotsb + \parder{^2\phi}{x_n^2} = 0\,, $$
    where $\phi: \R^n \to{\R}$,
    and it can be understood as the Euler--Lagrange equation for a Lagrangian $L_\circ\colon\oplus^n\T\R\to\R$ given by
    $$ L_\circ(q,v_1,\dotsc,v_n) = \frac{1}{2}(v_1^2 + \dotsb + v_n^2)\,. $$
    Consider now the phase bundle $\oplus^n\T\R\times\R^n$ with canonical coordinates $(q, v_1,\dotsc,v_n, s^1,\dotsc,s^n)$ and the $n$-contact Lagrangian function $L\colon\oplus^n\T\R\times\R^n\to\R$ given by
    $$ L = L_\circ - \gamma_is^i\,, $$
    where $\gamma_i\in\R$ are constants. The $n$-contact Euler--Lagrange equations \eqref{eq:eulerlagrange2} for this Lagrangian $L$ become
    \begin{equation}\label{eq:damped-laplace-eq}
        \parder{^2\phi}{x_1^2} + \dotsb + \parder{^2\phi}{x_n^2} + \gamma_i\parder{\phi}{x_i} = 0\,,\quad \parder{s^1}{x_1} + \dotsb + \parder{s^n}{x_n} = L\,,
    \end{equation}
    where $\phi^{(1)}_s : \R^n \to \oplus^n\T\R\times\R^n$, is the first prolongation of the map $\phi$, given by $\phi^{(1)}_s=\left(\phi, \parder{\phi}{x_1},\ldots,\parder{\phi}{x_n},s^1,\ldots,s^n\right)$.
    
    Note that the first equation in \eqref{eq:damped-laplace-eq} is an elliptic partial differential equation that can be understood as a Laplace's equation with some additional first-order terms.
\end{example}

%%%%%%%%%%%%%%%%%%%%%%%%%%%%%%%%%%%%%%%%%%%%%%%%%%%%%%%%%%%%%%%%

\section{Geometric \texorpdfstring{$k$}--contact Lagrangian equations}\label{gemetricformalism}

In this section we review some of the main aspects of the $k$-contact  Lagrangian formalism for non-conservative field theories, first introduced in \cite{Gaset2021}.

\begin{definition}\label{def:lagrangian-elements}
A \textbf{Lagrangian function} is a function $L\in \Cinfty(\oplus^k\T Q\times\R^k)$.
\begin{itemize}
\item The \textbf{Lagrangian energy} associated to $L$ is the function defined by $E_L = \Delta(L) - L\in \Cinfty(\oplus^k\T Q\times\R^k)$.
\item The \textbf{Poincaré--Cartan forms} associated to $L$ are $$ \theta^\alpha_L = \d L\circ J^\alpha\in\Omega^1(\oplus^k\T Q\times\R^k)\,. $$
\item We define the following one-forms associated to $L$
    $$ \eta^\alpha_L = \d s^\alpha - \theta^\alpha_L\in\Omega^1(\oplus^k\T Q\times\R^k)\,, $$
    called \textbf{contact Lagrangian one-forms}.
\end{itemize}
\end{definition}
    Note that the contact Lagrangian forms introduced above are not contact forms. However, in favourable cases, they define a $k$-contact structure on $\oplus^k\T Q\times\R^k$, motivating their name. In natural coordinates, the local expressions of the objects introduced in Definition \ref{def:lagrangian-elements} are
    \begin{equation}\label{energy}
    E_L = v^i_\alpha\parder{L}{v^i_\alpha} - L\,,
    \end{equation}
    \begin{equation}\label{eq:contact-forms}
        \eta^\alpha_L = \d s^\alpha - \ds\frac{\partial L}{\partial v^i_\alpha} \d q^i , \quad  \quad \d \eta^\alpha_L = \ds\frac{\partial^2 L}{\partial q^j \partial v^i_\alpha} \d q^i \wedge \d q^j + \ds\frac{\partial^2 L}{\partial v^j_\beta \partial v^i_\alpha} \d q^i \wedge \d v^j_\beta + \ds\frac{\partial^2 L}{\partial s^\beta \partial v^i_\alpha} \d q^i \wedge \d s^\beta .
    \end{equation}

\begin{definition}
     A Lagrangian function $L\colon\oplus^k\T Q\times\R^k\to\R$ is said to be \textbf{regular} if the Hessian of the Lagrangian function $L$ with respect to the fibre coordinates, namely
    \begin{equation}\label{eq:hessian}
        g^{\alpha\beta}_{ij} = \parderr{L}{v^i_\alpha}{v^j_\beta}\,,
    \end{equation}
    has maximal rank $nk$ on $\oplus^k\T Q\times\R^k$. Otherwise, the Lagrangian function is \textbf{singular}.
\end{definition}

\begin{remark}
    A Lagrangian function $L:\oplus^k\T Q\times\R^k\to\R$ is regular if, and only if, the family of $k$ differential one-forms $\eta^1_L,\ldots,\eta^k_L$ defines a $k$-contact structure on $\oplus^k\T Q\times\R^k$. In this case, $(\oplus^k\T Q\times\R^k,\eta^\alpha_L)$ becomes a \textbf{$k$-contact manifold} (see \cite{Gaset2021b}).
\end{remark}
 
 Using the forms $\eta^\alpha_L$ we can write equations 
\eqref{eq:k-contact-Euler--Lagrange-equations} as follows
	\begin{equation}\label{eq:k-contact-Euler-Lagrange-section}
		\begin{dcases}
			i_{\displaystyle\T\phi_s^{(1)}\Big(\parder{}{t^\alpha}\Big)}\d\eta^\alpha_L = \left( \d E_L  + \dparder{L}{s^\alpha} \eta^\alpha_L \right)\circ\phi_s^{(1)}\,,\\
            i_{\displaystyle\T\phi_s^{(1)}\Big(\parder{}{t^\alpha}\Big)}\eta^\alpha_L = -E_L\circ\phi_s^{(1)}\,,
		\end{dcases}
	\end{equation}
 where 
$$
\T_t\phi_s^{(1)}\left(\parder{}{t^\alpha}\Big\vert_t\right)=
\parder{\phi^i}{t^\alpha}\Big\vert_t \,
 \parder{}{q^i}\Big\vert_{\phi_s^{(1)}(t)}
+
\parderr{\phi^i}{t^\alpha}{t^\beta}\Big\vert_t \,
 \parder{}{v^i_\beta}\Big\vert_{\phi_s^{(1)}(t)}
+
\parder{s^\beta}{t^\alpha}\Big\vert_t \,
 \parder{}{s^\beta}\Big\vert_{\phi_s^{(1)}(t)} \, .
$$

\begin{definition}
    The \textbf{geometric $k$-contact Lagrangian equations} for a $k$-vector field $\bfX=(X_1,\dotsc,X_k)$ on $\oplus^k\T Q\times\R^k$ are
    \begin{equation}\label{eq:k-contact-Lagrangian-fields}
        \begin{dcases}
            i_{X_\alpha} \d\eta_L^\alpha = \d E_L + \dparder{L}{s^\alpha}  \eta_L^\alpha \,,\\
            i_{X_\alpha}\eta_L^\alpha = -E_L \,.
        \end{dcases}
    \end{equation}
    We will denote by $\X^k_L(\oplus^k\T Q\times\R^k)$ the set of \textbf{$k$-contact Lagrangian $k$-vector fields}, namely the $k$-vector fields $\bfX = (X_1,\dotsc,X_k)$ on $\oplus^k\T Q\times\R^k$, which are solutions to the equations \eqref{eq:k-contact-Lagrangian-fields}.

    \end{definition}
 
For a $k$-vector field ${\bfX}=(X_1,\dotsc,X_k)\in\X^k_L(\oplus^k\T Q\times\R^k)$ with local expression
$$ X_\alpha = X_\alpha^i\parder{}{ q^i} + X_{\alpha\beta}^i\parder{}{v^i_\beta} + X_\alpha^\beta\parder{}{s^\beta}\,, $$
equations \eqref{eq:k-contact-Lagrangian-fields} read
 \begin{align}
    0 &= \left(X_\alpha^j - v^j_\alpha\right) \parderr{L}{s^\beta}{v^j_\alpha} \,,
    \label{primeq}\\
    0 &= \left(X_\alpha^j - v^j_\alpha\right) \parderr{L}{v^i_\beta}{v^j_\alpha}\,, \label{A-E-L-eqs2}\\
    0 &= \left(X_\alpha^j - v^j_\alpha\right) \parderr{L}{q^i}{v^j_\alpha} + \parder{L}{q^i} - X_\alpha^\beta\, \parderr{L}{s^\beta}{v^i_\alpha} - X_\alpha^j \, \parderr{L}{q^j}{v^i_\alpha} - X_{\alpha\beta}^j\, \parderr{L}{v^j_\beta}{v^i_\alpha} + \parder{L}{s^\alpha} \parder{L}{v^i_\alpha}\,,\label{A-E-L-eqs3}\\
    0 &= L + \left(X_\alpha^j - v^j_\alpha\right) \parder{L}{v^i_\alpha} - X_\alpha^\alpha\,. \label{A-E-L-eqs4}
\end{align}
If $L$ is a regular Lagrangian, equations \eqref{A-E-L-eqs2} lead to $X_\alpha^j = v^j_\alpha$, which are the {\sc sopde} condition for the $k$-vector field ${\bfX}$. Then, \eqref{primeq} holds identically, and \eqref{A-E-L-eqs3} and \eqref{A-E-L-eqs4} give
\begin{equation} 
    X_{\alpha\beta}^j\, \parderr{L}{v^j_\beta}{v^i_\alpha} + v_\alpha^j\, \parderr{L}{q^j}{v^i_\alpha} + X_\alpha^\beta\, \parderr{L}{s^\beta}{v^i_\alpha} - \parder{L}{q^i} = \parder{L}{s^\alpha} \parder{L}{v^i_\alpha}\,,\qquad  X_\alpha^\alpha = L \,. \label{burgos2} 
\end{equation} 
The following lemma is a direct consequence of equations \eqref{A-E-L-eqs2}, \eqref{A-E-L-eqs3} and \eqref{A-E-L-eqs4}.
\begin{lemma}\label{lema}
    Consider a Lagrangian function $L\in \Cinfty(\oplus^k\T Q\times\R^k)$.
    \begin{enumerate}[{\rm (1)}]
        \item If $L$ is a regular Lagrangian, then any $k$-vector 
        field ${\bfX} = (X_1,\dotsc,X_k)\in\X^k_L(\oplus^k\T Q\times\R^k)$ is a {\sc sopde}, which is locally given by formula        \eqref{localsode1} 
        and satisfies equations   \eqref{burgos2}. 
         Moreover, if $\bfX$ is integrable, its integral sections are canonical lifts $\phi_s^{(1)}$ of solutions of the $k$-contact Euler--Lagrange equations \eqref{eq:k-contact-Euler--Lagrange-equations}.
        
        \item If ${\bfX}$ is {\sc sopde} and ${\bfX}\in\X^k_L(\oplus^k\T Q\times\R^k)$ then it is locally given by formula        \eqref{localsode1} and satisfies equations   \eqref{burgos2}.
    \end{enumerate}
\end{lemma}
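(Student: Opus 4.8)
The plan is to work entirely in the local coordinate presentation of the geometric equations, exploiting the fact that the coordinate expansion of \eqref{eq:k-contact-Lagrangian-fields} has already been carried out in equations \eqref{primeq}--\eqref{A-E-L-eqs4}. Both items then reduce to tracking which of these four scalar equations survive, and in what form, under their respective hypotheses; the paper's own remark that the lemma is a "direct consequence" of \eqref{A-E-L-eqs2}--\eqref{A-E-L-eqs4} signals that the substantive computation is the coordinate expansion preceding the statement, and the proof is mostly a matter of organizing its consequences.

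For item (1), the first step is to extract the \textsc{sopde} condition from \eqref{A-E-L-eqs2}. Reading $\left(X_\alpha^j - v^j_\alpha\right)$ as a vector indexed by the pair $(j,\alpha)$ and the Hessian $\parderr{L}{v^i_\beta}{v^j_\alpha}=g^{\beta\alpha}_{ij}$ as the $nk\times nk$ matrix with rows indexed by $(i,\beta)$ and columns by $(j,\alpha)$, equation \eqref{A-E-L-eqs2} is a homogeneous linear system in $\left(X_\alpha^j - v^j_\alpha\right)$ with this matrix as coefficient matrix. Regularity of $L$ means the matrix has maximal rank $nk$, hence is invertible, forcing $X_\alpha^j = v^j_\alpha$ for all $j,\alpha$. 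This is exactly the coefficient pattern of \eqref{localsode1}, so $\bfX$ is a \textsc{sopde}. Substituting $X_\alpha^j = v^j_\alpha$ back, \eqref{primeq} becomes trivial, while \eqref{A-E-L-eqs3} and \eqref{A-E-L-eqs4} collapse precisely into the two relations \eqref{burgos2}.

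The "moreover" clause then follows by combining the \textsc{sopde} structure with the Proposition on integral maps of a \textsc{sopde}: if $\bfX$ is integrable, any integral section is a first prolongation $\phi_s^{(1)}$, and the coordinates of $\phi_s$ satisfy \eqref{nn1} with $\Gamma^i_{\alpha\beta}=X^i_{\alpha\beta}$ and $\Gamma^\beta_\alpha=X^\beta_\alpha$. Along such a section one has $\parder{\phi^j}{t^\alpha}=v^j_\alpha$, $\parderr{\phi^j}{t^\alpha}{t^\beta}=X^j_{\alpha\beta}$ and $\parder{s^\beta}{t^\alpha}=X^\beta_\alpha$; since $\bfX\in\X^k_L(\oplus^k\T Q\times\R^k)$ satisfies \eqref{burgos2} identically, evaluating \eqref{burgos2} along $\phi_s^{(1)}$ and inserting these prolongation relations reproduces term by term the field equations \eqref{eq:eulerlagrange2}, equivalently \eqref{eq:k-contact-Euler--Lagrange-equations}.

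Item (2) is the same computation with regularity removed: assuming a priori that $\bfX$ is a \textsc{sopde} already gives $X_\alpha^j = v^j_\alpha$ and the local form \eqref{localsode1}, so \eqref{A-E-L-eqs2} and \eqref{primeq} hold automatically and \eqref{A-E-L-eqs3}, \eqref{A-E-L-eqs4} once more reduce to \eqref{burgos2}. The one genuinely non-bookkeeping step, and the only place any care is needed, is the rank argument inverting the Hessian in item (1) — in particular, keeping the index pairing $(i,\beta)$ versus $(j,\alpha)$ straight so that maximal rank $nk$ legitimately yields triviality of the kernel. Everything else is substitution into the precomputed identities \eqref{primeq}--\eqref{A-E-L-eqs4}.
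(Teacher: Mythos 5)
Your proof is correct and follows essentially the same route as the paper: the paper presents the lemma as a direct consequence of the coordinate equations \eqref{primeq}--\eqref{A-E-L-eqs4}, deriving the \textsc{sopde} condition $X^j_\alpha=v^j_\alpha$ from \eqref{A-E-L-eqs2} via regularity of the Hessian and then collapsing the remaining equations to \eqref{burgos2}, exactly as you do. Your handling of the ``moreover'' clause by restricting \eqref{burgos2} to integral sections likewise matches the paper's intended argument (spelled out later in Proposition \ref{known2}).
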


\begin{remark}
The results of Lemma \ref{lema} are the foundations of the $k$-contact Lagrangian formalism, and equations \eqref{eq:k-contact-Lagrangian-fields} can be seen as a geometric version of the $k$-contact Euler--Lagrange field equations \eqref{eq:eulerlagrange2}.
\end{remark}

\begin{remark}
    Notice that the particular case $k=1$ gives the contact Lagrangian formalism for mechanical systems with dissipation \cite{Gaset2020}.
\end{remark}

In the following example we look for solutions $(X_1,X_2) \in \mathfrak{X}^2_L (\oplus^2 \T \R \times \R^2)$ of the geometric $k$-contact Euler--Lagrange equations \eqref{eq:k-contact-Lagrangian-fields} for the regular Lagrangian describing the damped vibrating string, and we give an example of an integrable {\sc sopde} which is a solution.
\begin{example} 
    Let us consider again the vibrating string with damping introduced in Example \ref{exmpl:damped-vibrating-string}. Recall that the Lagrangian function $L:\oplus^2\T\R\times\R^2\to\R$ of this system given by \eqref{eq:Lagrangian-vibrating-string-dis} is regular. The Lagrangian energy associated to the Lagrangian function $L$ is
    $$ E_L = \Delta(L) - L = \frac{1}{2}\rho v_t^2 - \frac{1}{2}\tau v_x^2 + \gamma s^t\,, $$

    and the contact one-forms are
    \begin{equation*}
        \eta^t_L = \d s^t - \theta^t_L = \d s^t - \rho v_t\d q\,,\qquad\eta^x_L = \d s^x - \theta^x_L = \d s^x + \tau v_x\d q\,.
    \end{equation*}

    Consider now a two-vector field $\bfX = (X_1,X_2)\in\X^2(\oplus^2\T \R\times\R^2)$ with local expression
    \begin{align*}
        X_1 &= f_1\parder{}{q} + F_{1t}\parder{}{v_t} + F_{1x}\parder{}{v_x} + g_1^t\parder{}{s^t} + g_1^x\parder{}{s^x}\,,\\
        X_2 &= f_2\parder{}{q} + F_{2t}\parder{}{v_t} + F_{2x}\parder{}{v_x} + g_2^t\parder{}{s^t} + g_2^x\parder{}{s^x}\,.
    \end{align*}

    The first equation in \eqref{eq:k-contact-Lagrangian-fields} reads
    $$ \rho f_1\d v_t - \rho F_{1t}\d q - \tau f_2\d v_x + \tau F_{2x}\d q = \rho v_t\d v_t - \tau v_x\d v_x + \gamma\rho v_t\d q\,, $$
    which yields the conditions
    \begin{align}
        -\rho\, F_{1t} + \tau F_{2x} &= \gamma\rho v_t & & \mbox{(coefficients in $\d u$)}\,,\label{eq:vibrating-string-lagrangian-condition-1}\\
        f_1 &= v_t & & \mbox{(coefficients in $\d v_t$)}\,,\label{eq:vibrating-string-lagrangian-condition-2}\\
        f_2 &= v_x & & \mbox{(coefficients in $\d v_x$)}\,.\label{eq:vibrating-string-lagrangian-condition-3}
    \end{align}
    Notice that the equations \eqref{eq:vibrating-string-lagrangian-condition-2} and \eqref{eq:vibrating-string-lagrangian-condition-3} above are the {\sc sopde} conditions for the two-vector field $\bfX$.
    On the other hand, the second equation in \eqref{eq:k-contact-Lagrangian-fields} gives the condition
    $$ g_1^t + g_2^x = \frac{1}{2}\rho v_t^2 - \frac{1}{2}\tau v_x^2 - \gamma s^t = L\,. $$
    Hence, the two-vector field $\bfX$ solution has the local expression
    \begin{align*}
        X_1 &= v_t\parder{}{q} + \left( \frac{\tau}{\rho}F_{2x} - \gamma v_t \right)\parder{}{v_t} + F_{1x}\parder{}{v_x} + \left( L - g_2^x \right)\parder{}{s^t} + g_1^x\parder{}{s^x}\,,\\
        X_2 &= v_x\parder{}{q} + F_{2t}\parder{}{v_t} + F_{2x}\parder{}{v_x} + g_2^t\parder{}{s^t} + g_2^x\parder{}{s^x}\,,
    \end{align*}
    where the functions $F_{1x},F_{2t},F_{2x},g_1^x,g_2^t,g_2^x$ remain undetermined.
    
    In order to give an example of an integrable {\sc sopde} solution to \eqref{eq:k-contact-Lagrangian-fields}, we assume 
    \begin{itemize}
        \item the functions $F_{1x}=F_{2t}=g_1^x=g_2^t=0$,
        \item the functions $g_2^x$ only depend on the variables $v_x$,
        \item $\ds\frac{\partial F_2^x}{\partial s^x}=0$, $\ds\frac{\partial F_{2x}}{\partial v_t}\neq 0$, $\ds\frac{\partial F_{2x}}{\partial s^t} \neq 0$.
    \end{itemize}
    In this case, the integrability conditions, $[\Gamma_\alpha,\Gamma_\beta]=0$ for every $\alpha,\beta
    $, are
    $$\ds\frac{\partial F_{2x}}{\partial v_x}=0\,,\qquad \ds\frac{\partial g_2^x}{\partial v_x} = - \tau v_x\,, \qquad \ds\frac{\tau}{\rho} F_{2x} - \gamma v_t + L - g_2^x = 0 \,.$$
    Then, an integrable {\sc sopde} $(X_1,X_2)$ is
    $$\begin{array}{l}
    X_1 = v_t \, \ds\frac{\partial }{\partial q} + \left(- \frac{1}{2} \rho v_t^2 + \gamma s^t \right) \, \ds\frac{\partial }{\partial v_t} + \left(\frac{1}{2} \rho v_t^2 - \gamma s^t \right) \ds\frac{\partial}{\partial s^t} \, , \\ \noalign{\medskip}
    X_2 = v_x \, \ds\frac{\partial}{\partial q} + \left(\frac{\gamma \rho}{\tau} v_t - \frac{\rho^2}{2 \tau} v_t^2 + \frac{\gamma \rho}{\tau} s^t \right) \, \ds\frac{\partial}{\partial v_x} - \frac{1}{2} \tau v_x^2 \, \ds\frac{\partial}{\partial s^x}\, ,
    \end{array}$$
    and by Lemma \ref{lema}, any integral section of $(X_1,X_2)$ is a canonical lift of a solution $\phi_s$ of the $k$-contact Euler--Lagrange equations \eqref{eq:damped-vibrating-string-edp}.
\end{example}

\section{Relations between solutions and Lagrangian \texorpdfstring{$k$}--vector fields}\label{relations}

Now we characterize, locally and globally, the set of {\sc sopde}s in $\X^k_L(\oplus^k\T Q\times\R^k)$.

\begin{proposition}\label{prop:EL-sopde}
    Let $L\in\Cinfty(\oplus^k\T Q\times\R^k)$ be a Lagrangian function and $ {\bf \Gamma} \in \X^k(\oplus^k\T Q\times\R^k)$ a $k$-vector field. Then,
\begin{enumerate}[{\rm (1)}]
    \item 
    ${\bf \Gamma} \in\X^k_L(\oplus^k\T Q\times\R^k)$ if, and only if, it satisfies the conditions
    \begin{equation}\label{eq:lieeta} 
        \Lie_{\Gamma_\alpha} \eta^\alpha_L = \parder{L}{s^\alpha} \eta^\alpha_L\,,\qquad i_{\Gamma_\alpha}\eta^\alpha_L = -E_L \,.
    \end{equation} 
\item Moreover, if ${\bf \Gamma}$ is a {\sc sopde} on $\tkq \times \r^k$ with local expression \eqref{localsode1},
the above equations \eqref{eq:lieeta} can be written locally as
    \begin{equation}\label{eq:lieeta1} 
    \Gamma_\alpha\left(\parder{L}{v^i_\alpha}\right) - \parder{L}{q^i} = \parder{L}{s^\alpha}\parder{L}{v_\alpha^i} \,,\qquad \Gamma^\alpha_\alpha=L \,.
    \end{equation}
\end{enumerate}
\end{proposition}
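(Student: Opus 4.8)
The plan is to prove (1) by noticing that the second equation in \eqref{eq:lieeta} is literally the second equation in the definition \eqref{eq:k-contact-Lagrangian-fields} of $\X^k_L(\oplus^k\T Q\times\R^k)$, so only the first equations have to be matched, and these are linked by Cartan's magic formula. For each $\alpha$ one has $\Lie_{\Gamma_\alpha}\eta^\alpha_L = i_{\Gamma_\alpha}\d\eta^\alpha_L + \d(i_{\Gamma_\alpha}\eta^\alpha_L)$; summing over $\alpha$ and using the linearity of $\d$ yields
$$\Lie_{\Gamma_\alpha}\eta^\alpha_L = i_{\Gamma_\alpha}\d\eta^\alpha_L + \d\!\left(i_{\Gamma_\alpha}\eta^\alpha_L\right).$$
Imposing the common second equation $i_{\Gamma_\alpha}\eta^\alpha_L = -E_L$ turns the last term into $-\d E_L$, so $\Lie_{\Gamma_\alpha}\eta^\alpha_L = i_{\Gamma_\alpha}\d\eta^\alpha_L - \d E_L$. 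Substituting this identity shows that the first equation of \eqref{eq:lieeta}, namely $\Lie_{\Gamma_\alpha}\eta^\alpha_L = \parder{L}{s^\alpha}\eta^\alpha_L$, is equivalent to $i_{\Gamma_\alpha}\d\eta^\alpha_L = \d E_L + \parder{L}{s^\alpha}\eta^\alpha_L$, which is precisely the first equation of \eqref{eq:k-contact-Lagrangian-fields}. This establishes both implications at once.

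For (2) I would compute the two intrinsic equations directly in the \textsc{sopde} coordinates \eqref{localsode1}, using $\eta^\alpha_L = \d s^\alpha - \parder{L}{v^i_\alpha}\d q^i$ from \eqref{eq:contact-forms}. The second equation is immediate: since $\Gamma_\alpha$ has $q^i$-component $v^i_\alpha$ and $s^\beta$-component $\Gamma^\beta_\alpha$, contraction gives $i_{\Gamma_\alpha}\eta^\alpha_L = \Gamma^\alpha_\alpha - v^i_\alpha\parder{L}{v^i_\alpha}$, and equating this with $-E_L = L - v^i_\alpha\parder{L}{v^i_\alpha}$ yields $\Gamma^\alpha_\alpha = L$. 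For the first equation I would expand $\Lie_{\Gamma_\alpha}\eta^\alpha_L$ by the Leibniz rule, using $\Lie_{\Gamma_\alpha}\d s^\alpha = \d(\Gamma_\alpha(s^\alpha)) = \d(\Gamma^\alpha_\alpha)$ and $\Lie_{\Gamma_\alpha}\d q^i = \d(\Gamma_\alpha(q^i)) = \d v^i_\alpha$, obtaining
$$\Lie_{\Gamma_\alpha}\eta^\alpha_L = \d(\Gamma^\alpha_\alpha) - \Gamma_\alpha\!\left(\parder{L}{v^i_\alpha}\right)\d q^i - \parder{L}{v^i_\alpha}\,\d v^i_\alpha.$$
Inserting $\Gamma^\alpha_\alpha = L$ so that $\d(\Gamma^\alpha_\alpha) = \d L$, the $\d v^i_\alpha$-terms cancel and the $\d s^\alpha$-terms reproduce the right-hand side $\parder{L}{s^\alpha}\d s^\alpha$ automatically; matching the remaining $\d q^i$-coefficients against $-\parder{L}{s^\alpha}\parder{L}{v^i_\alpha}$ gives exactly $\Gamma_\alpha(\parder{L}{v^i_\alpha}) - \parder{L}{q^i} = \parder{L}{s^\alpha}\parder{L}{v^i_\alpha}$, the first equation of \eqref{eq:lieeta1}.

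As a consistency check I would expand $\Gamma_\alpha(\parder{L}{v^i_\alpha})$ with the \textsc{sopde} expression and observe that it reproduces precisely the left-hand side of the earlier equations \eqref{burgos2}, so \eqref{eq:lieeta1} is just a compact, intrinsic repackaging of the content already extracted from \eqref{eq:k-contact-Lagrangian-fields} for regular Lagrangians. There is no deep obstacle here: the whole argument is Cartan's formula plus a short local computation. The only point requiring genuine care is the Einstein summation bookkeeping, since the index $\alpha$ is simultaneously summed and attached to both the vector-field label $\Gamma_\alpha$ and the form label $\eta^\alpha_L$; one must track which contributions truly cancel (the $\d v^i_\alpha$-terms) and which match termwise (the $\d s^\alpha$-terms) before reading off the surviving $\d q^i$-coefficients.
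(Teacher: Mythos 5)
Your proposal is correct and follows essentially the same route as the paper: part (1) is the observation that the second equations coincide and the first equations are exchanged by Cartan's magic formula once $\d(i_{\Gamma_\alpha}\eta^\alpha_L)=-\d E_L$ is imposed, and part (2) is the same direct local computation (the paper expands $\Lie_{\Gamma_\alpha}\eta^\alpha_L$ as $-\d E_L+i_{\Gamma_\alpha}\d\eta^\alpha_L$ rather than via the Leibniz rule on $\d s^\alpha-\parder{L}{v^i_\alpha}\d q^i$, but the bookkeeping and the resulting cancellations are identical).
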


\begin{proof}
\begin{enumerate}[{\rm (1)}]
    \item It is a direct consequence of equations \eqref{eq:k-contact-Lagrangian-fields} and Cartan calculus.

    \item Using the local expressions of the {\sc sopde} ${\bf \Gamma}$, the Lagrangian energy $E_L$, and the one-forms $\eta^\alpha_L$ (\eqref{localsode1}, \eqref{energy} and \eqref{eq:contact-forms} respectively),
    a direct computation on local coordinates shows that
    \begin{align*}
    0 & = i_{\Gamma_\alpha} \eta^\alpha_L + E_L = \left(\d s^\alpha - \parder{L}{v^i_\alpha}\d q^i\right)(\Gamma_\alpha) + v^i_\alpha \parder{L}{v^i_\alpha} - L 
     = \Gamma^\alpha_\alpha - L
    \end{align*}
    and
    \begin{align*}
        0 &= \Lie_{\Gamma_\alpha} \eta^\alpha_L - \parder{L}{s^\alpha} \eta_L^\alpha = -\d E_L + i_{\Gamma_\alpha}\d\eta^\alpha_L - \parder{L}{s^\alpha} \eta_L^\alpha \\
        &= -\d\left(v^i_\alpha \parder{L}{v^i_\alpha} -L\right)+ i_{\Gamma_\alpha} \left(\d q^i\wedge \d\left(\parder{L}{v^i_\alpha}\right)\right) - \parder{L}{s^\alpha} \left(\d s^\alpha - \parder{L}{v^i_\alpha}\d q^i\right)\\
        &= - \parder{L}{v^i_\alpha}\d v^i_\alpha + \parder{L}{q^i}\d q^i + \parder{L}{v^i_\alpha}\d v^i_\alpha + \parder{L}{s^\alpha}\d s^\alpha - \Gamma_\alpha\left(\parder{L}{v^i_\alpha}\right) \d q^i - \parder{L}{s^\alpha} \d s^\alpha + \parder{L}{s^\alpha} \parder{L}{v^i_\alpha}\d q^i\\
        &= \left( \parder{L}{q^i} - \Gamma_\alpha\left(\parder{L}{v^i_\alpha}\right) + \parder{L}{s^\alpha} \parder{L}{v^i_\alpha} \right)\d q^i\,,
    \end{align*}
    which proves the result.
 \end{enumerate}    
\end{proof}

We present now a new relation between solutions to the $k$-contact Euler--Lagrange equations \eqref{eq:k-contact-Euler--Lagrange-equations} and $k$-contact Lagrangian $k$-vector fields on $\oplus^k\T Q\times\R^k$. This relation plays a fundamental role in this paper.
\begin{proposition}\label{known2}
    Let $L\in \Cinfty(\oplus^k\T Q\times\R^k)$ be a Lagrangian function.
\begin{enumerate}[{\rm (1)}]
\item A map 
$\phi_s: U \subset \R^k\to Q\times\R^k \, ,\, 
  \phi_s(t)=(\phi(t),s^\alpha(t))$
  is a solution to the
  $k$-contact Euler--Lagrange equations \eqref{eq:eulerlagrange2} if, and only if,
\begin{equation}
g^{\alpha\beta}_{ij}\circ \phi_s^{(1)} \left( \Gamma^j_{\alpha\beta}\circ
  \phi_s^{(1)} - \frac{\partial^2 \phi^j}{\partial t^{\alpha}\partial t^{\beta}}\right)+ \frac{\partial^2 L}{\partial s^\beta\partial v^i_\alpha}\circ \phi_s^{(1)}\left( \Gamma_\alpha^\beta\circ \phi_s^{(1)}  - \derpar{s^\beta}{t^\alpha}\right)= 0\,,
     \label{eqn:xiel0}
     \end {equation}
     \begin{equation}
\Gamma^\alpha_\alpha\circ \phi_s^{(1)}= \derpar{s^\alpha}{t^\alpha} = L\circ \phi_s^{(1)}\,, 
 \label{eqn:xiel}
\end{equation}
for any {\sc sopde} ${\bf \Gamma} \in\X^k_L(\oplus^k\T Q\times\R^k)$, with local expression \eqref{localsode1}.
\item If a $k$-vector field ${\bf \Gamma} \in\X^k_L(\oplus^k\T Q\times\R^k)$ is integrable,
and $\phi_s^{(1)}: U \subset \R^k \to       \oplus^k\T Q\times\R^k$ is an integral section,
then $\phi_s$ is a solution to the $k$-contact Euler--Lagrange equations  \eqref{eq:eulerlagrange2}.
\end{enumerate}
\end{proposition}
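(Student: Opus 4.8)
The plan is to reduce the statement to two purely local identities and subtract them. Throughout I fix a \textsc{sopde} $\mathbf{\Gamma}\in\X^k_L(\oplus^k\T Q\times\R^k)$ with local expression \eqref{localsode1}; the crucial input is Lemma \ref{lema}(2), which guarantees that every such $\mathbf{\Gamma}$ satisfies equations \eqref{burgos2} as identities on $\oplus^k\T Q\times\R^k$, namely $\Gamma^j_{\alpha\beta}\,\parderr{L}{v^j_\beta}{v^i_\alpha} + v^j_\alpha\,\parderr{L}{q^j}{v^i_\alpha} + \Gamma^\beta_\alpha\,\parderr{L}{s^\beta}{v^i_\alpha} - \parder{L}{q^i} = \parder{L}{s^\alpha}\parder{L}{v^i_\alpha}$ together with $\Gamma^\alpha_\alpha = L$. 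I would pull both of these back along $\phi_s^{(1)}$, where $v^j_\alpha\circ\phi_s^{(1)} = \partial\phi^j/\partial t^\alpha$.

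For part (1), the heart of the argument is a single subtraction. Evaluating the vector-index identity in \eqref{burgos2} along $\phi_s^{(1)}$ and subtracting from it the first field equation in \eqref{eq:eulerlagrange2} (also evaluated along $\phi_s^{(1)}$), the three terms carrying $\parderr{L}{q^j}{v^i_\alpha}\,\partial\phi^j/\partial t^\alpha$, $\parder{L}{q^i}$ and $\parder{L}{s^\alpha}\parder{L}{v^i_\alpha}$ cancel identically; what survives is exactly the left-hand side of \eqref{eqn:xiel0}, once the symmetry $\parderr{L}{v^j_\beta}{v^i_\alpha}=\parderr{L}{v^i_\alpha}{v^j_\beta}=g^{\alpha\beta}_{ij}$ from \eqref{eq:hessian} is used to identify the Hessian coefficient. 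Because \eqref{burgos2} holds identically (in particular along $\phi_s^{(1)}$), this shows that the first field equation is satisfied along $\phi_s^{(1)}$ if and only if \eqref{eqn:xiel0} holds. The remaining scalar equation is immediate: $\Gamma^\alpha_\alpha = L$ turns the second field equation $\partial s^\alpha/\partial t^\alpha = L\circ\phi_s^{(1)}$ into \eqref{eqn:xiel}. This establishes the equivalence in both directions, and since \eqref{burgos2} holds for every \textsc{sopde} in $\X^k_L$, the conclusion does not depend on the particular $\mathbf{\Gamma}$ chosen.

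For part (2), I would simply specialize part (1) to an integral section. If $\mathbf{\Gamma}$ is integrable and $\phi_s^{(1)}$ is one of its integral sections, the integral-section relations \eqref{solsopde} for a \textsc{sopde} give $\partial^2\phi^j/\partial t^\alpha\partial t^\beta = \Gamma^j_{\alpha\beta}\circ\phi_s^{(1)}$ and $\partial s^\beta/\partial t^\alpha = \Gamma^\beta_\alpha\circ\phi_s^{(1)}$. Hence both parenthesized differences in \eqref{eqn:xiel0} vanish and \eqref{eqn:xiel} holds automatically, so part (1) yields that $\phi_s$ solves the $k$-contact Euler--Lagrange equations \eqref{eq:eulerlagrange2}.

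I do not expect a genuine obstacle here: the whole proof is a bookkeeping subtraction of two local identities. The only points requiring care are the index matching between the Hessian $\parderr{L}{v^j_\beta}{v^i_\alpha}$ appearing in \eqref{burgos2} and the coefficient $g^{\alpha\beta}_{ij}$ in \eqref{eqn:xiel0}, reconciled by the symmetry of second partials, and the logical role of \eqref{burgos2} as an identity valid for every \textsc{sopde} in $\X^k_L$ (Lemma \ref{lema}(2)), which is precisely what licenses the phrase ``for any \textsc{sopde}'' in the statement.
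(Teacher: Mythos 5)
Your proof is correct and follows essentially the same route as the paper's: both pull the local characterization \eqref{burgos2} of {\sc sopde}s in $\X^k_L(\oplus^k\T Q\times\R^k)$ back along $\phi_s^{(1)}$ and subtract it from the $k$-contact Euler--Lagrange equations, so that the mixed $q$-, $q^i$- and $s$-terms cancel and exactly \eqref{eqn:xiel0}--\eqref{eqn:xiel} survive. The only cosmetic difference is in part (2), where the paper restricts \eqref{A-E-L-eqs3} directly to the holonomic integral section (using that $\Gamma^i_\alpha\circ\phi_s^{(1)}=v^i_\alpha\circ\phi_s^{(1)}$ there) rather than invoking part (1), but the underlying computation is identical.
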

\begin{proof}
\begin{enumerate}[{\rm (1)}]
    \item Consider a map $\phi_s: U \subset \R^k\to Q\times\R^k \,,\,\,
  \phi_s(t)=(\phi(t),s^\alpha(t))\,$. If $\phi_s$ is a solution to the
   $k$-contact Euler--Lagrange equations
  \eqref{eq:eulerlagrange2},
  then we have
  \begin{equation}
 g^{\alpha\beta}_{ij}\circ \phi_s^{(1)} \frac{\partial^2 \phi^j}{\partial t^{\alpha}\partial
  t^{\beta}} + \frac{\partial^2L}{\partial q^j \partial v^i_{\alpha}} \circ \phi_s^{(1)} \frac{\partial \phi^j}{\partial t^\alpha}
  +\frac{\partial^2 L}{\partial s^\beta\partial v^i_\alpha}\circ \phi_s^{(1)}\derpar{s^\beta}{t^\alpha}
- \frac{\partial L}{\partial q^i}\circ \phi_s^{(1)}=\left(\displaystyle\frac{\partial L}{\partial s^\alpha}\displaystyle\frac{\partial L}{\partial v^i_\alpha}\right)\circ \phi_s^{(1)}\,, \label{gxiphi0}
 \end{equation}
 \begin{equation}
\derpar{s^\alpha}{t^\alpha} = L\circ \phi_s^{(1)}\,. \label{gxiphi1}  
 \end{equation}

Using Proposition \ref{prop:EL-sopde}, a {\sc sopde} ${\bf \Gamma}$ is in $\X^k_L(\oplus^k\T Q\times\R^k)$  if, and only if, it satisfies equations equations
\begin{equation}\label{burgos10}
\Gamma_\alpha^\alpha=L\,,
\end{equation}
\begin{equation}\label{burgos20}
\frac{\partial^2 L}{\partial v^j_\beta\partial v^i_\alpha}\Gamma_{\alpha\beta}^j
+\frac{\partial^2 L}{\partial q^j \partial v^i_\alpha}v_\alpha^j
+\frac{\partial^2 L}{\partial s^\beta\partial v^i_\alpha}\Gamma_\alpha^\beta
\displaystyle
-\frac{\partial L}{\partial q^i} =\frac{\partial L}{\partial s^\alpha}
\frac{\partial L}{\partial v^i_\alpha}\,.
\end{equation}

Now, restricting equations \eqref{burgos10} and \eqref{burgos20} to the image of $\phi_s^{(1)}=(\phi^{(1)},s^\alpha)$ one gets
\begin{equation}
     (g^{\alpha\beta}_{ij}\circ \phi_s^{(1)}) (\Gamma^j_{\alpha\beta}\circ
  \phi_s^{(1)}) + \frac{\partial^2 L}{\partial q^j \partial v^i_{\alpha}}
  \circ \phi_s^{(1)}+ \frac{\partial \phi^j}{\partial t^{\alpha}}\frac{\partial^2 L}{\partial s^\beta\partial v^i_\alpha}\circ \phi_s^{(1)}(\Gamma_\alpha^\beta \circ \phi_s^{(1)}) - \frac{\partial L}{\partial q^i}
  \circ \phi_s^{(1)}
  =\left(\frac{\partial L}{\partial s^\alpha}
\frac{\partial L}{\partial v^i_\alpha}\right)\circ \phi_s^{(1)}\,,  \label{gxiphi00}
\end{equation}
\begin{equation}
\Gamma_\alpha^\alpha\circ \phi_s^{(1)}=L \circ \phi_s^{(1)}.\label{gxiphi}\end{equation}

Using equations \eqref{gxiphi00} and \eqref{gxiphi}, we have that $\phi_s$ satisfies \eqref{eqn:xiel0}
and \eqref{eqn:xiel} if, and only if, it satisfies \eqref{gxiphi0} and \eqref{gxiphi1}, that are equivalent to $k$-contact Euler--Lagrange equations \eqref{eq:eulerlagrange2}.

\item Since ${\bf \Gamma} \in \X^k_L(\tkq
 \times \r^k)$ it follows that ${\bf \Gamma}$ satisfies equations \eqref{primeq}, \eqref{A-E-L-eqs2}, \eqref{A-E-L-eqs3} and \eqref{A-E-L-eqs4}.
 If $\phi^{(1)}_s$ is an integral map of ${\bf \Gamma}$, we know that
 $\Gamma_\alpha^i(\phi^{(1)}_s) = v^i_\alpha(\phi^{(1)}_s) ,$
 and if we restrict equations \eqref{A-E-L-eqs3} to $\phi^{(1)}_s$, we obtain that $\phi_s$ satisfies the $k$-contact Euler--Lagrange equations \eqref{eq:eulerlagrange2}.
\end{enumerate}
\end{proof}

\begin{remark}
Equations \eqref{eqn:xiel0} and \eqref{eqn:xiel} do not require any
relationship between the $k$-vector field ${\bf \Gamma} \in\X^k_L(\oplus^k\T Q\times\R^k)$ and the solution $\phi_s$ to the $k$-contact Euler--Lagrange equations
\eqref{eq:eulerlagrange2}. In other words, one might
  have a solution $\phi_s$ to the $k$-contact Euler--Lagrange equations
  \eqref{eq:eulerlagrange2} which may not be a  solution for any ${\bf \Gamma} \in\X^k_L(\oplus^k\T Q\times\R^k)$.
\end{remark}

\begin{remark}
Propositions \ref{prop:EL-sopde} and \ref{known2} play an important role in subsequent sections of this paper. 
When the Lagrangian $L$ does not depend on the variables $s^\alpha$, for $\alpha=1,\ldots,k$, the function $L$ may be defined on the bundle $\tkq$, and the previous results can be formulated in terms of the $k$-symplectic Lagrangian formalism (see Propositions 2.11 and 2.12 in \cite{BBS}).
\end{remark}

\section{Dissipation laws}\label{dlcs}

In this section, we discuss dissipation laws for Lagrangian functions defined on $\tkq \times \r^k$ and we give certain relations between them and integrable {\sc sopde}s on $\X^k_L( \oplus^k\T Q\times\R^k)$. From now on, unless otherwise stated, $L$ will denote a (regular or singular) Lagrangian function on the phase bundle $\oplus^k\T Q\times\R^k$ and $\eta_L^\alpha$ will denote the associated contact Lagrangian one-forms.

\begin{definition}\label{dynsym}
    A map $F=(F^1,\ldots, F^k)\colon \oplus^k\T Q\times\R^k \to \R^k$ is called a \textbf{dissipation law} if the divergence of
    $$F\circ\phi_s^{(1)}=(F^1\circ\phi_s^{(1)},\ldots,F^k\circ\phi_s^{(1)})\colon U\subset \R^k \longrightarrow \R^k$$ 
    satisfies
    \begin{equation}\label{disipkmaps}
        \Div(F\circ\phi_s^{(1)}) = \derpar{(F^\alpha\circ\phi_s^{(1)})}{t^\alpha}= \left( \ds\derpar{L}{s^\alpha}F^\alpha \right) \circ\phi_s^{(1)}\,,
    \end{equation} 
    for every solution $\phi_s\colon U\subset  \R^k\to Q\times \R^k$ of the $k$-contact Euler--Lagrange equations \eqref{eq:k-contact-Euler--Lagrange-equations}.     
\end{definition}

Then, we have
 \begin{align}
 \frac{\partial (F^\alpha \circ \phi^{(1)}_s)}{\partial
t^\alpha}\Big\vert_{t} &= \frac{\partial F^{\alpha}}{\partial
q^i}\Big\vert_{\phi_s^{(1)}(t)}\frac{\partial \phi^i}{\partial
t^{\alpha}}\Big\vert_{ t} +\frac{\partial F^{\alpha}} {\partial
v^i_{\beta}}\Big\vert_{ \phi_s^{(1)}(t)} \frac{\partial^2
\phi^i}{\partial t^{\alpha}
\partial t^{\beta}}\Big\vert_{t} +\frac{\partial F^{\alpha}} {\partial
s^{\beta}}\Big\vert_{ \phi_s^{(1)}(t)} \derpar{s^\beta}{t^\alpha}\Big\vert_{t}\nonumber \\
&= \left( \ds\derpar{L}{s^\alpha}F^\alpha  \right)\circ\phi_s^{(1)}(t)\,,\label{conlaw}
\end{align}
where $\phi_s^{(1)}(t)=\left(\phi^i(t), \dparder{\phi^i}{t^\alpha }\Big\vert_t,s^\alpha(t)\right)$.

In the following example, we present two dissipation laws for the equation of the damped vibrating string given by the Lagrangian function \eqref{eq:Lagrangian-vibrating-string-dis}.
\begin{example}
The two maps $F_1 = (F_1^t, F_1^x), F_2 = (F_2^t, F_2^x) \colon \oplus^2 \T \R\times \R^2\to\R^2$, where
   \begin{itemize} 
       \item [a)] $F_1^t = \rho v_t ,\quad F_1^x = -\tau v_x $,
       \item [b)] $F_2^t = \ds\frac{1}{\rho} s^t - \ds\frac{1}{2} q v_t , \quad F_2^x = \ds\frac{1}{\rho} s^x + \ds\frac{\tau}{2\rho} q v_x $,
   \end{itemize}
give dissipation laws for the $k$-contact Euler--Lagrange equations \eqref{eq:damped-vibrating-string-edp}.
Thus, if $\phi_s$ is a solution to \eqref{eq:damped-vibrating-string-edp}, we deduce that
    $$\frac{\partial (F_1^t \circ \phi^{(1)}_s)}{\partial
t} + \frac{\partial (F_1^x \circ \phi^{(1)}_s)}{\partial
x} = \rho \frac{\partial^2 \phi}{\partial t^2} - \tau \frac{\partial^2 \phi}{\partial x^2} = - \gamma \rho \frac{\partial \phi}{\partial t} = \left(\parder{L}{s^t}F_1^t\right) \circ \phi_s^{(1)} ,$$
where $\phi^{(1)}_s=\left(\phi,\dparder{\phi}{t},\dparder{\phi}{x},s^t,s^x\right)$.

One also verifies that $(F_2^t,F_2^x)$ satisfies \eqref{conlaw} by a straightforward computation.
\end{example}

\begin{remark}
Note that if the Lagrangian $L$ does not depend on the variables $s^\alpha$, for $\alpha= 1,\ldots,k$, we can consider $L$ and the functions $F^\alpha$ defined on $\oplus^k\T Q $, and hence the above Definition \ref{dynsym} becomes the definition of conservation law for $k$-symplectic Lagrangian systems (see \cite{rsv07}).
\end{remark}

The following result gives us a first relationship between dissipation laws and integrable {\sc sopde}s on $\X^k_L( \oplus^k\T Q\times\R^k)$.

\begin{lemma}\label{imp} Let $F=(F^1,\ldots,F^k)\colon\oplus^k\T Q\times\R^k\rightarrow\R^k$ be a dissipation law. Then, every integrable {\sc sopde} ${\bf \Gamma} = (\Gamma_1,\dots,\Gamma_k)\in \X^k_L(\oplus^k\T Q\times\R^k)$ satisfies
\begin{equation}\label{xifa}
    \Gamma_{\alpha}(F^\alpha) = \parder{L}{s^\alpha}F^\alpha\,.
\end{equation}
 \end{lemma}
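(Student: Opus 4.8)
The plan is to evaluate the dissipation condition \eqref{disipkmaps} along the integral sections of ${\bf\Gamma}$ and then to exploit integrability in order to promote the resulting pointwise identity to an identity of functions on $\tkq\times\r^k$. First I would fix an integral section $\phi_s^{(1)}\colon U\subset\r^k\to\tkq\times\r^k$ of ${\bf\Gamma}$. Since ${\bf\Gamma}$ is an integrable {\sc sopde} lying in $\X^k_L(\tkq\times\r^k)$, Lemma \ref{lema} guarantees that $\phi_s^{(1)}$ is the first prolongation of a solution $\phi_s$ of the $k$-contact Euler--Lagrange equations \eqref{eq:k-contact-Euler--Lagrange-equations}. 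Hence the dissipation identity \eqref{conlaw} applies to this $\phi_s$, giving
$$\derpar{(F^\alpha\circ\phi_s^{(1)})}{t^\alpha} = \left(\derpar{L}{s^\alpha}F^\alpha\right)\circ\phi_s^{(1)}.$$

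Next I would expand the left-hand side by the chain rule, exactly as displayed in \eqref{conlaw}, and substitute the integral-section relations \eqref{solsopde} for a {\sc sopde}, namely $\parder{\phi^i}{t^\alpha}=v^i_\alpha\circ\phi_s^{(1)}$, $\parderr{\phi^i}{t^\alpha}{t^\beta}=\Gamma^i_{\alpha\beta}\circ\phi_s^{(1)}$ and $\parder{s^\beta}{t^\alpha}=\Gamma_\alpha^\beta\circ\phi_s^{(1)}$. After this substitution the three terms of the divergence collect into
$$\left(v^i_\alpha\parder{F^\alpha}{q^i} + \Gamma^i_{\alpha\beta}\parder{F^\alpha}{v^i_\beta} + \Gamma_\alpha^\beta\parder{F^\alpha}{s^\beta}\right)\circ\phi_s^{(1)},$$
which, upon comparison with the local expression \eqref{localsode1} of $\Gamma_\alpha$, is precisely $\left(\Gamma_\alpha(F^\alpha)\right)\circ\phi_s^{(1)}$. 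Combining this with the dissipation identity above yields
$$\left(\Gamma_\alpha(F^\alpha)\right)\circ\phi_s^{(1)} = \left(\derpar{L}{s^\alpha}F^\alpha\right)\circ\phi_s^{(1)}$$
along the section (with the Einstein sum over $\alpha$ on both sides).

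The final step, which is also where the integrability hypothesis is indispensable, is to upgrade this to an equality of functions on all of $\tkq\times\r^k$. Because ${\bf\Gamma}$ is integrable, every point of $\tkq\times\r^k$ lies in the image of some integral section $\phi_s^{(1)}$; evaluating the identity above at that point shows that $\Gamma_\alpha(F^\alpha)$ and $\derpar{L}{s^\alpha}F^\alpha$ agree everywhere, which is exactly \eqref{xifa}. I expect the only delicate point to be precisely this passage from ``along every integral section'' to ``at every point'': it rests on Lemma \ref{lema} to identify the integral sections of ${\bf\Gamma}$ with prolongations of Euler--Lagrange solutions, and on the integrability of ${\bf\Gamma}$ to cover the whole phase bundle by such sections. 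The computation that identifies the divergence with $\Gamma_\alpha(F^\alpha)\circ\phi_s^{(1)}$ is a routine chain-rule manipulation and should present no real obstacle.
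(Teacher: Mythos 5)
Your proof is correct and follows essentially the same route as the paper's: evaluate the dissipation condition \eqref{conlaw} along an integral section, substitute the {\sc sopde} integral-section relations to identify the divergence with $\Gamma_\alpha(F^\alpha)\circ\phi_s^{(1)}$, and invoke integrability to pass from the images of integral sections to every point of $\oplus^k\T Q\times\R^k$. The only quibble is the citation: to conclude that integral sections project onto Euler--Lagrange solutions you should appeal to Proposition \ref{known2}(2), which needs only ${\bf\Gamma}\in\X^k_L(\oplus^k\T Q\times\R^k)$ integrable, rather than to Lemma \ref{lema}, whose relevant part assumes $L$ regular while Lemma \ref{imp} allows singular Lagrangians.
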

\begin{proof}
Since ${\bf \Gamma}$ is an integrable {\sc sopde}, for every point $x\in\oplus^k\T Q\times\R^k$ there exists an integral
section $\phi_s^{(1)}:U\subset\R^k\rightarrow\oplus^k\T Q\times\R^k$ such
that
\begin{enumerate}[{\rm (1)}]
\item $\phi_s$ is a solution to the $k$-contact Euler--Lagrange equations, because ${\bf \Gamma} \in     \X^k_L( \oplus^k\T Q\times\R^k)$,
\item $\phi_s$ satisfies
$$ \phi_s^{(1)}(0)=x\,, \qquad
(\phi_s^{(1)})_*(t)\left(\parder{}{t^\alpha}\Big\vert_t\right) = \Gamma_{\alpha}(\phi_s^{(1)} (t))\,,
$$
for every $t\in U$ and $\alpha=1,\ldots, k$.
\end{enumerate}
Condition (2) above means that
\begin{equation}\label{caz}
 v^i_\alpha(\phi_s^{(1)}(t))=\frac{\partial \phi^i} {\partial
t^\alpha}\Big\vert_{t} \,, \qquad  \Gamma^i_{\alpha
\beta}(\phi_s^{(1)}(t)) =\frac{\partial^2 \phi^i}{\partial t^{\alpha}
\partial t^{\beta}}\Big\vert_{t}  \,, \qquad   \Gamma_\alpha^\beta(\psi(t))=\frac{\partial s^\alpha} {\partial
t^\beta}\Big\vert_t\,.
\end{equation} 

Since $F=(F^1,\ldots,F^k)$ is a dissipation
law, using equation \eqref{conlaw} at $t=0$, and equation \eqref{caz}, we have
\begin{align}
 \left( \ds\derpar{L}{s^\alpha}F^\alpha \right) \circ\phi_s^{(1)}(0) &= \ds \frac{\partial (F^\alpha \circ \phi_s^{(1)})}{\partial
t^\alpha}\Big\vert_{0} \\ 
&= \ds\frac{\partial
F^{\alpha}}{\partial q^i}\Big\vert_{ \phi_s^{(1)}(0)} \frac{\partial
\phi^i}{\partial t^{\alpha}}\Big\vert_{ 0} +\frac{\partial
F^{\alpha}} {\partial v^i_{\beta}}\Big\vert_{ \phi_s^{(1)}(0)}
\frac{\partial^2 \phi^i}{\partial t^{\alpha}
\partial t^{\beta}}\Big\vert_{0} +\frac{\partial F^{\alpha}} {\partial
s^{\beta}}\Big\vert_{ \phi_s^{(1)}(0)} \derpar{s^\beta}{t^\alpha}\Big\vert_{0}
\\% \noalign{\medskip} 
&=
\ds\frac{\partial F^{\alpha}}{\partial q^i}\Big\vert_{x}
v^i_\alpha(x) +\ds\frac{\partial F^{\alpha}} {\partial
v^i_{\beta}}\Big\vert_{x} \Gamma^i_{\alpha\beta}(x) 
+\ds\frac{\partial F^{\alpha}} {\partial
s^\beta}\Big\vert_{x} \Gamma_\alpha^\beta(x)
= \Gamma_{\alpha}(x)(F^\alpha)\,.
\end{align}
\end{proof}

The converse of Lemma \ref{imp} is not true, and the reason is that, as we can see from equations \eqref{eqn:xiel0} and \eqref{eqn:xiel}, we might have solutions $\phi_s$ of the $k$-contact Euler--Lagrange equations \eqref{eq:eulerlagrange2}  which are not solutions to some ${\bf \Gamma} \in \X^k_L(\oplus^k\T Q\times\R^k)$. However, we show in the following lemma that, under some assumption on the functions $F^{\alpha}$, this converse is true.

\begin{lemma} \label{lem:12}
Let $L\in \Cinfty(\oplus^k\T Q\times\R^k)$ be a Lagrangian and
 assume that there exists  a vector field $X\in
\X(\oplus^k\T Q\times\R^k)$ such that
\begin{equation}\label{ixoa}  i_X \d\eta_L^\alpha = \d F^\alpha\,,\quad \text{for every} \,\,\, \alpha=1,\ldots,k\,,
\end{equation}
for some functions $F^{\alpha}:\oplus^k\T Q\times\R^k\to\R$. Then, $F = (F^{\alpha})$ is a dissipation law for the $k$-contact Euler--Lagrange equations
\eqref{eq:k-contact-Euler--Lagrange-equations} if, and only if,
$$\Gamma_{\alpha}(F^{\alpha})=\frac{\partial L}{\partial s^\alpha} F^\alpha$$ for every integrable {\sc sopde}   ${\bf \Gamma} \in \X^k_L(      \oplus^k\T Q\times\R^k)$.
\end{lemma}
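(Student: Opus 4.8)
The plan is to prove the two implications separately, observing that the forward one is already available. If $F$ is a dissipation law, then Lemma \ref{imp} gives $\Gamma_\alpha(F^\alpha)=\parder{L}{s^\alpha}F^\alpha$ for every integrable {\sc sopde} $\Gamma\in\X^k_L(\oplus^k\T Q\times\R^k)$, and this direction uses nothing about hypothesis \eqref{ixoa}. The whole content of the lemma therefore lies in the converse, where \eqref{ixoa} is exactly the ingredient that repairs the failure of the converse of Lemma \ref{imp} noted in the preceding remark.

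First I would extract the coordinate consequences of \eqref{ixoa}. Writing $X=a^i\parder{}{q^i}+b^i_\gamma\parder{}{v^i_\gamma}+c^\gamma\parder{}{s^\gamma}$ and contracting $X$ with the expression \eqref{eq:contact-forms} for $\d\eta^\alpha_L$, the matching in $i_X\d\eta^\alpha_L=\d F^\alpha$ of the coefficients of $\d v^j_\beta$ and of $\d s^\beta$ yields
\begin{equation}\label{eq:plan-I}
\parder{F^\alpha}{v^j_\beta}=\parderr{L}{v^i_\alpha}{v^j_\beta}\,a^i=g^{\alpha\beta}_{ij}\,a^i\,,\qquad \parder{F^\alpha}{s^\beta}=\parderr{L}{s^\beta}{v^i_\alpha}\,a^i\,,
\end{equation}
which are the only relations I shall need (the $\d q^i$-coefficients give a further relation for $\parder{F^\alpha}{q^i}$ that will play no role, since those terms cancel below). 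The point is that \eqref{eq:plan-I} forces the fibre- and $s$-derivatives of $F^\alpha$ to factor through $a^i$ times precisely the blocks $g^{\alpha\beta}_{ij}$ and $\parderr{L}{s^\beta}{v^i_\alpha}$ that govern the characterization of solutions in Proposition \ref{known2}.

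Next I would fix a solution $\phi_s$ of the $k$-contact Euler--Lagrange equations together with \emph{any} {\sc sopde} $\Gamma=(\Gamma_1,\dots,\Gamma_k)\in\X^k_L$ with local expression \eqref{localsode1}, and compare the divergence $\parder{(F^\alpha\circ\phi_s^{(1)})}{t^\alpha}$ given in \eqref{conlaw} with $\Gamma_\alpha(F^\alpha)\circ\phi_s^{(1)}$. Because $\phi_s^{(1)}$ is a first prolongation, one has $v^i_\alpha\circ\phi_s^{(1)}=\parder{\phi^i}{t^\alpha}$, so the $\parder{F^\alpha}{q^i}$-terms cancel and the discrepancy reduces to the terms in $\parder{F^\alpha}{v^i_\beta}$ and $\parder{F^\alpha}{s^\beta}$ contracted against $\big(\Gamma^i_{\alpha\beta}\circ\phi_s^{(1)}-\parderr{\phi^i}{t^\alpha}{t^\beta}\big)$ and $\big(\Gamma^\beta_\alpha\circ\phi_s^{(1)}-\parder{s^\beta}{t^\alpha}\big)$. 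Inserting \eqref{eq:plan-I}, this discrepancy becomes $-(a^m\circ\phi_s^{(1)})$ times the left-hand side of \eqref{eqn:xiel0} with free index $m$; and that bracket vanishes along $\phi_s^{(1)}$ by Proposition \ref{known2}, precisely because $\phi_s$ is a solution. Hence $\parder{(F^\alpha\circ\phi_s^{(1)})}{t^\alpha}=\Gamma_\alpha(F^\alpha)\circ\phi_s^{(1)}$ for every such $\Gamma$ and every solution $\phi_s$.

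Finally, to close the converse I would pick an integrable {\sc sopde} $\Gamma\in\X^k_L$ and use the standing hypothesis $\Gamma_\alpha(F^\alpha)=\parder{L}{s^\alpha}F^\alpha$ to rewrite the last identity as $\parder{(F^\alpha\circ\phi_s^{(1)})}{t^\alpha}=\big(\parder{L}{s^\alpha}F^\alpha\big)\circ\phi_s^{(1)}$, which is exactly \eqref{disipkmaps}; as $\phi_s$ was arbitrary, $F$ is a dissipation law. The step I expect to be the crux is the identification in the previous paragraph: the merit of \eqref{ixoa} is that it makes the gap between the \emph{true} divergence along $\phi_s$ and $\Gamma_\alpha(F^\alpha)$ equal to $a^m$ times the Euler--Lagrange obstruction of Proposition \ref{known2}, which is why it vanishes on \emph{every} solution, including those that are not integral sections of any $\Gamma\in\X^k_L$. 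The only auxiliary point to keep honest is the existence of an integrable {\sc sopde} in $\X^k_L$ used in the last step, which is guaranteed for regular $L$ by Lemma \ref{lema}.
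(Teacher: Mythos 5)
Your proposal is correct and follows essentially the same route as the paper: the forward direction via Lemma \ref{imp}, the coefficient identities $\parder{F^\alpha}{v^j_\beta}=g^{\alpha\beta}_{ij}X^i$ and $\parder{F^\alpha}{s^\beta}=\parderr{L}{s^\beta}{v^i_\alpha}X^i$ extracted from \eqref{ixoa}, and the contraction of the Euler--Lagrange obstruction \eqref{eqn:xiel0} with $X^i\circ\phi_s^{(1)}$ to identify the divergence of $F\circ\phi_s^{(1)}$ with $\Gamma_\alpha(F^\alpha)\circ\phi_s^{(1)}$ for an arbitrary solution $\phi_s$. Your closing caveat about the existence of an integrable \textsc{sopde} in $\X^k_L$ is a fair point that the paper also leaves implicit.
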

\begin{proof}
 The direct implication is given by Lemma \ref{imp}.
 For the converse implication, let $$ X=X^i\dparder{}{q^i} + X^i_{\alpha}
 \dparder{}{v^i_{\alpha}}
 +X^{\alpha} \dparder{}{s^{\alpha}}$$
 be a vector field on $\oplus^k\T Q\times\R^k$  satisfying \eqref{ixoa}.  In view of equations \eqref{eq:contact-forms}, we can
write both sides of the equations \eqref{ixoa} as
\begin{align}
i_X \d\eta_L^\alpha&=\left[\left(\frac{\partial^2L}{\partial q^i\partial v^j_{\alpha}} -
    \frac{\partial^2L}{\partial q^j \partial v^i_{\alpha}}\right) X^j
  - g^{\alpha\beta}_{ij} X^j_{\beta} - \frac{\partial^2L}{\partial s^\beta \partial v^j_\alpha} X^\beta \right]\d q^i +
g^{\alpha\beta}_{ij}X^i \d v^j_{\beta} + \frac{\partial^2L}{\partial s^\beta \partial v^j_\alpha} X^j \d s^\beta\,,\\
\d F^\alpha &= \frac{\partial
	F^{\alpha}}{\partial q^i} \d q^i +  \frac{\partial
	F^{\alpha}}{\partial v^j_{\beta}} \d v^j_{\beta} + \frac{\partial F^\alpha}{\partial s^\beta} \d s^\beta\,,
\end{align}
and necessarily we have
\begin{equation}
 \frac{\partial F^{\alpha}}{\partial v^j_{\beta}} =
 g^{\alpha\beta}_{ij} X^i \,, \qquad \frac{\partial F^{\alpha}}{\partial s^{\beta}} =
 \frac{\partial^2L}{\partial s^\beta \partial v^i_\alpha} X^i \,. \label{pfg} 
\end{equation}
Consider now $\phi_s$ to be any solution to the $k$-contact Euler--Lagrange equations
\eqref{eq:eulerlagrange2} (which may not be a solution of any ${\bf \Gamma}$). It
follows that $\phi_s$ satisfies equations \eqref{eqn:xiel0} and \eqref{eqn:xiel}, since ${\bf \Gamma}$ is
assumed to be an integrable {\sc sopde}.

Contracting equations \eqref{eqn:xiel} with  $X^i\circ \phi_s^{(1)}$, we obtain
\begin{equation}
(X^i\circ \phi_s^{(1)}) (g^{\alpha\beta}_{ij}\circ \phi_s^{(1)}) \left(\Gamma^j_{\alpha\beta}\circ \phi_s^{(1)} - \frac{\partial^2 \phi^j}{\partial t^{\alpha} \partial t^{\beta}}\right) + (X^i\circ \phi_s^{(1)}) \left(\frac{\partial^2 L}{\partial s^\beta\partial v^i_\alpha}\circ \phi_s^{(1)}\right) \left( \Gamma_\alpha^\beta\circ \phi_s^{(1)}  -    
  \derpar{s^\beta}{t^\alpha}    \right) = 0 .
  \label{xgphi}
\end{equation}
If we replace formulas \eqref{pfg} in equation \eqref{xgphi}, we have
\begin{align}
0 & = \ds\frac{\partial F^{\alpha}}{\partial
  v^j_{\beta}}\circ \phi_s^{(1)}\left(\Gamma^j_{\alpha\beta} \circ
\phi_s^{(1)} - \ds\frac{\partial^2 \phi^j}{\partial
    t^{\alpha}\partial t^{\beta}} \right) + \ds\frac{\partial F^\alpha}{\partial s^\beta}\circ \phi_s^{(1)}\left( \Gamma_\alpha^\beta\circ \phi_s^{(1)}  -    
\ds\parder{s^\beta}{t^\alpha}    \right) \\
&= - \ds\frac{\partial (F^{\alpha} \circ \phi_s^{(1)})}{\partial
  t^{\alpha}} + \Gamma_{\alpha}(F^{\alpha}) \circ \phi_s^{(1)} \, ,
\end{align}
therefore we conclude
$$\frac{\partial (F^{\alpha} \circ \phi_s^{(1)})}{\partial
	t^{\alpha}} = \Gamma_{\alpha}(F^{\alpha}) \circ \phi_s^{(1)} = \left(\frac{\partial L}{\partial s^\alpha} F^\alpha\right) \circ \phi_s^{(1)}\,,$$
and the result follows.
\end{proof}

\begin{remark}
Notice that the above Lemma \ref{lem:12} is analogous  to Lemma 3.4 in \cite{BBS} for the $k$-symplectic Lagrangian formalism.
\end{remark}

\section{Symmetries}\label{section:symmetries}

In this section we study several different notions of symmetry for $k$-contact Lagrangian field theories depending on the structure they preserve: natural symmetries (preserving the Lagrangian), dynamical symmetries (preserving the solutions) and $k$-contact symmetries (preserving the underlying geometric structures). We investigate the relations between these symmetries and prove several Noether-like theorems relating the different types of symmetries to dissipation laws.

\subsection{Natural symmetries}

We will now try to understand the symmetries of a $k$-contact Lagrangian system which are lifts of vector fields on the configuration space. One can see in \cite{LML} some interesting results related with this kind of symmetries in the case $k=1$.

For any vector field $Z\in\X(Q)$, we denote by $Z^C , Z^{V_\alpha}\in \X(\oplus^k\T Q\times\R^k)$ its complete and vertical $\alpha$-lifts, defined in Section \ref{VClifts} and extended to $\oplus^k\T Q\times\R^k$ in a natural way.
Their local expressions are the same as in
\eqref{lifts-vectors1}.

\begin{definition}
A vector field $Z\in \mathfrak{X}(Q)$ is said to be an \textbf{infinitesimal natural symmetry} of a Lagrangian function $L\in\Cinfty(\oplus^k\T Q\times\R^k)$ if  $Z^C(L)=0$.
\end{definition}
Then, we have the following result relating natural symmetries and dissipation laws.
\begin{theorem}
 Let $Z \in \mathfrak{X}(Q)$ be an infinitesimal natural symmetry of a Lagrangian function $L$. Then, the functions $F^\alpha=Z^{V_\alpha}(L)$ give a dissipation law.
\end{theorem}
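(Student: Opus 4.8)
The plan is to recognize that the hypothesis $Z^C(L)=0$ is exactly what is needed to put the functions $F^\alpha = Z^{V_\alpha}(L)$ into the scope of Lemma \ref{lem:12}, using the complete lift $X = Z^C$ as the distinguished vector field. Concretely, I would verify the hypothesis \eqref{ixoa} with $X=Z^C$ and then check the reduced SOPDE condition that Lemma \ref{lem:12} requires, so that the dissipation law follows without ever touching a solution $\phi_s$ directly.

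First I would establish the two identities $i_{Z^C}\eta^\alpha_L = -F^\alpha$ and $i_{Z^C}\d\eta^\alpha_L = \d F^\alpha$ for each $\alpha$. The first is immediate from $\eta^\alpha_L = \d s^\alpha - \theta^\alpha_L$: since $Z^C$ (extended to the $\R^k$ factor as in \eqref{lifts-vectors1}) has no $\partial/\partial s^\alpha$ component, $i_{Z^C}\d s^\alpha = 0$, while $\theta^\alpha_L(Z^C) = \d L(J^\alpha Z^C) = \d L(Z^{V_\alpha}) = Z^{V_\alpha}(L) = F^\alpha$, using $J^\alpha(X^C)=X^{V_\alpha}$. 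For the second, Cartan's formula gives $\Lie_{Z^C}\eta^\alpha_L = i_{Z^C}\d\eta^\alpha_L + \d\,i_{Z^C}\eta^\alpha_L = i_{Z^C}\d\eta^\alpha_L - \d F^\alpha$, so it suffices to prove $\Lie_{Z^C}\eta^\alpha_L = 0$. As $Z^C s^\alpha = 0$ this reduces to $\Lie_{Z^C}\theta^\alpha_L = 0$, and the structural input is that complete lifts preserve the canonical $k$-tangent structure, $\Lie_{Z^C}J^\alpha = 0$ (checked in adapted coordinates from \eqref{localJ} and \eqref{lifts-vectors1}); together with $\Lie_{Z^C}\d L = \d(Z^C L) = 0$ this yields $\Lie_{Z^C}\theta^\alpha_L = \Lie_{Z^C}(\d L\circ J^\alpha) = 0$. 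Hence $X=Z^C$ satisfies \eqref{ixoa}.

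Next, for any integrable {\sc sopde} ${\bf \Gamma}=(\Gamma_1,\dots,\Gamma_k)\in\X^k_L(\oplus^k\T Q\times\R^k)$, I would use the antisymmetry of the interior product, the identity just proved, and the first geometric equation \eqref{eq:k-contact-Lagrangian-fields} to compute
\[
\Gamma_\alpha(F^\alpha) = i_{\Gamma_\alpha}\d F^\alpha = i_{\Gamma_\alpha}i_{Z^C}\d\eta^\alpha_L = -i_{Z^C}\bigl(\d E_L + \tparder{L}{s^\alpha}\,\eta^\alpha_L\bigr) = -Z^C(E_L) + \tparder{L}{s^\alpha}\,F^\alpha,
\]
where the last step reuses $i_{Z^C}\eta^\alpha_L = -F^\alpha$. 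It then remains to show $Z^C(E_L)=0$. Since $E_L = \Delta(L)-L$ and $Z^C(L)=0$, this amounts to $Z^C(\Delta(L))=0$, which follows from $[Z^C,\Delta]=0$ (a routine check on $q^i,v^i_\alpha,s^\alpha$ using \eqref{lifts-vectors1} and \eqref{delta1}) combined again with $Z^C(L)=0$. Thus $\Gamma_\alpha(F^\alpha) = \tparder{L}{s^\alpha}F^\alpha$ for every such ${\bf \Gamma}$, and Lemma \ref{lem:12} delivers that $F=(F^\alpha)$ is a dissipation law.

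The hard part is really the pair of structural facts $\Lie_{Z^C}\theta^\alpha_L = 0$ and $Z^C(E_L)=0$: everything hinges on $Z^C$ preserving simultaneously the Lagrangian (by hypothesis), the canonical $k$-tangent structure, and the dilation field $\Delta$. If one prefers to avoid the geometric machinery, the theorem can instead be proved by a direct coordinate computation: expanding $\partial_{t^\alpha}(F^\alpha\circ\phi_s^{(1)})$ via \eqref{conlaw} with $F^\alpha = Z^i\,\partial L/\partial v^i_\alpha$, the terms carrying second derivatives of $L$ assemble (on a solution, where $\partial\phi^j/\partial t^\alpha = v^j_\alpha$) into the Euler--Lagrange combination \eqref{eq:eulerlagrange2}, equal to $\tparder{L}{q^i} + \tparder{L}{s^\alpha}\tparder{L}{v^i_\alpha}$, after which the remainder collapses to $Z^C(L) + \tparder{L}{s^\alpha}F^\alpha = \tparder{L}{s^\alpha}F^\alpha$.
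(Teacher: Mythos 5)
Your proof is correct, and it rests on the same pillar as the paper's: Lemma \ref{lem:12} applied with the distinguished vector field $X=Z^C$. What differs is how you verify the two hypotheses of that lemma. The paper works in coordinates: it gets $\Gamma_\alpha(Z^{V_\alpha}(L))=Z^C(L)+\tparder{L}{s^\alpha}\,Z^{V_\alpha}(L)$ directly from the local \textsc{sopde} form \eqref{eq:lieeta1} of the field equations, and it obtains $i_{Z^C}\d\eta^\alpha_L=\d\bigl(Z^{V_\alpha}(L)\bigr)$ by differentiating the identity $Z^C(L)=0$ with respect to the fibre coordinates. You argue intrinsically instead: from $\Lie_{Z^C}J^\alpha=0$ and $[Z^C,\Delta]=0$ (both standard properties of complete lifts, and your coordinate checks of them are sound) together with $Z^C(L)=0$, you deduce $\Lie_{Z^C}\eta^\alpha_L=0$ and $Z^C(E_L)=0$, and then run a Cartan-calculus computation against the geometric equations \eqref{eq:k-contact-Lagrangian-fields}. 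This buys something the paper's proof leaves implicit: you in fact show that $Z^C$ is an infinitesimal Lagrangian $k$-contact symmetry in the sense of \eqref{last}, so the theorem could equally be read off from the later Corollary \ref{cafe} with $F^\alpha=-i_{Z^C}\eta^\alpha_L=Z^{V_\alpha}(L)$, making the natural-symmetry statement a special case of the $k$-contact dissipation theorem. The price is the two structural lemmas on lifts, whereas the paper's coordinate computation is self-contained; your closing remark that a direct expansion of $\partial(F^\alpha\circ\phi_s^{(1)})/\partial t^\alpha$ via \eqref{conlaw} also works is accurate as well.
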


\begin{proof}
Consider the local expressions of $Z^{V_\alpha}$, $Z^C$ and ${\bf \Gamma}$ in \eqref{lifts-vectors1} and \eqref{localsode1}.
From \eqref{eq:lieeta1}, and  taking into account that the functions $Z^i$ only depend on the variables $q^i$, we get
\begin{equation}\label{infnatsym}
\Gamma_\alpha(Z^{V_\alpha}(L)) 
= \left(Z^i\frac{\partial}{\partial q^i} + v^j_\alpha \ds\frac{\partial Z^i}{\partial q^j} \frac{\partial}{\partial v^i_\alpha}\right)(L) + \ds\frac{\partial L}{\partial s^\alpha} \left(Z^i \frac{\partial}{\partial v^i_\alpha}\right)(L) 
 = Z^C(L) + \ds\frac{\partial L}{\partial s^\alpha} Z^{V_\alpha} (L) \,,
\end{equation}
for all integrable {\sc sopde} ${\bf \Gamma} \in \X^k_L(\oplus^k\T Q\times\R^k)$.

 Now, since $Z^C(L)=0$ we show that $\Gamma_{\alpha}(Z^{V_\alpha}(L))=\dparder{L}{s^\alpha} Z^{V_\alpha}(L)$. Finally, in view of Lemma \ref{lem:12}, we just need to check that \begin{equation}\label{eqYC}
    i_{Z^C} \d\eta^\alpha_L = \d\left(Z^{V_\alpha} (L)\right)
\end{equation}
to conclude that the functions $Z^{V_\alpha} (L)$ give a dissipation law.

If one applies $\dfrac{\partial}{\partial v^k_\beta}$ to the identity $Z^C(L)= Z^i \dfrac{\partial L}{\partial q^i} + v^j_\alpha \dfrac{\partial Z^i}{\partial q^j} \dfrac{\partial L}{\partial v^i_\alpha} = 0$,
 we obtain the relation \eqref{eqYC} in local coordinates.
\end{proof}
\begin{remark}
When the Lagrangian $L$ does not depend on variables $s^\alpha$, for $\alpha=1,\dotsc,k$, the function $L$ may be defined on the bundle $\tkq$, and the above result is similar to Proposition 3.15 in \cite{rsv07}, if we consider the function $g$ to be identically zero. In this case, the functions $F^\alpha$ will give a conservation law, according to the nature of the system.
\end{remark}

The next symmetries we are going to study are those transformations preserving some structure, and we will distinguish  between those that preserve the solutions of the system (dynamical symmetries) and those that preserve the geometric structures ($k$-contact symmetries). These notions were introduced in \cite{Gaset2020} for the case $k=1$, and in \cite{Gaset2021} for the general case $k\geq1$.

 \subsection{Dynamical symmetries}
 
We will begin by introducing the transformations preserving the solutions of the system.
\begin{definition} \label{dinamica-difeo}
A \textbf{Lagrangian dynamical symmetry} is a diffeomorphism 
$ \Phi: \oplus^k\T Q\times\R^k\to \oplus^k\T Q\times\R^k$  such that, for every solution $\phi_s^{(1)}$  to the $k$-contact Euler--Lagrange
equations \eqref{eq:k-contact-Euler--Lagrange-equations}, $\Phi\circ  \phi_s^{(1)}$ is also a solution.

An \textbf{infinitesimal Lagrangian dynamical symmetry} is a vector field $X\in \X(\oplus^k\T Q\times\R^k)$ whose local flow is made of dynamical symmetries.
 \end{definition}

The following result can be found in \cite{Gaset2021}.
\begin{lemma}\label{lemadynsim}
Let $X \in \mathfrak{X}(\oplus^k \T Q \times \R^k)$ be an infinitesimal dynamical symmetry. Then, for every ${\bf \Gamma} \in \mathfrak{X}^k_L(\oplus^k \T Q \times \R^k)$, we have
$i_{[X,\Gamma_\alpha]} \eta^\alpha_L = 0 .$
\end{lemma}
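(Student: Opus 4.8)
The plan is to exploit the flow of $X$ together with the defining property that dynamical symmetries carry solutions to solutions, and then to extract the infinitesimal statement by differentiating the second of the defining equations \eqref{eq:k-contact-Lagrangian-fields}. Write $\Phi_\epsilon$ for the local flow of $X$; by Definition \ref{dinamica-difeo} each $\Phi_\epsilon$ is a dynamical symmetry, so it sends solutions $\phi_s^{(1)}$ of the $k$-contact Euler--Lagrange equations to solutions.

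First I would treat an integrable ${\bf \Gamma}=(\Gamma_1,\dots,\Gamma_k)\in\X^k_L(\oplus^k\T Q\times\R^k)$ and show that $(\Phi_\epsilon)_*{\bf \Gamma}=((\Phi_\epsilon)_*\Gamma_\alpha)$ again lies in $\X^k_L(\oplus^k\T Q\times\R^k)$ for small $\epsilon$. Indeed, fix a point $x$; by integrability there is an integral section $\phi_s^{(1)}$ of ${\bf \Gamma}$ through $x$, which by Proposition \ref{known2} is a solution. Since $\Phi_\epsilon$ is a dynamical symmetry, $\psi^\epsilon:=\Phi_\epsilon\circ\phi_s^{(1)}$ is again a solution, and by the push-forward property of integral sections recalled in Section \ref{sec:geometric-elements} it is an integral section of $(\Phi_\epsilon)_*{\bf \Gamma}$, so that $\T\psi^\epsilon(\partial/\partial t^\alpha)=((\Phi_\epsilon)_*\Gamma_\alpha)\circ\psi^\epsilon$. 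Substituting this into the section form \eqref{eq:k-contact-Euler-Lagrange-section}, which $\psi^\epsilon$ satisfies as a solution, shows that $(\Phi_\epsilon)_*{\bf \Gamma}$ obeys \eqref{eq:k-contact-Lagrangian-fields} along the image of $\psi^\epsilon$. As $x$ ranges over $M=\oplus^k\T Q\times\R^k$ these images cover $M$ (because $\Phi_\epsilon$ is a diffeomorphism), whence $(\Phi_\epsilon)_*{\bf \Gamma}\in\X^k_L(\oplus^k\T Q\times\R^k)$.

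Now both ${\bf \Gamma}$ and $(\Phi_\epsilon)_*{\bf \Gamma}$ satisfy the second equation in \eqref{eq:k-contact-Lagrangian-fields}, namely $i_{(\Phi_\epsilon)_*\Gamma_\alpha}\eta^\alpha_L=-E_L$ for every $\epsilon$, while $\eta^\alpha_L$ and $E_L$ are independent of $\epsilon$. Differentiating this identity at $\epsilon=0$ and using $\tfrac{d}{d\epsilon}\big\vert_{0}(\Phi_\epsilon)_*\Gamma_\alpha=-[X,\Gamma_\alpha]$ then gives $i_{[X,\Gamma_\alpha]}\eta^\alpha_L=0$, which is the claim (the overall sign is irrelevant since the right-hand side is constant). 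Note that only the second defining equation is needed for this step.

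The hard part will be that the lemma is asserted for every ${\bf \Gamma}\in\X^k_L(\oplus^k\T Q\times\R^k)$, whereas the flow argument naturally applies only to integrable ones. To close this gap I would observe that the value $(i_{[X,\Gamma_\alpha]}\eta^\alpha_L)(x)$ depends only on the $1$-jet of ${\bf \Gamma}$ at $x$, since $[X,\Gamma_\alpha]$ at $x$ involves only the values and first derivatives of $X$ and $\Gamma_\alpha$ there, and then reduce a general ${\bf \Gamma}$, point by point, to an integrable element of $\X^k_L(\oplus^k\T Q\times\R^k)$ sharing the same $1$-jet at $x$. For a regular Lagrangian such an integrable representative arises as the prolongation of a solution of \eqref{eq:k-contact-Euler--Lagrange-equations} with prescribed initial data, via the {\sc sopde} characterization of Lemma \ref{lema}; establishing this realization cleanly is the delicate point. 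Alternatively, one may verify the purely algebraic identity $i_{[X,\Gamma_\alpha]}\eta^\alpha_L=-X(E_L)-i_{\Gamma_\alpha}\Lie_X\eta^\alpha_L$, valid for any ${\bf \Gamma}$ obeying the second equation by $i_{[X,Y]}=\Lie_X i_Y-i_Y\Lie_X$, and argue that its right-hand side vanishes as a consequence of the dynamical-symmetry hypothesis.
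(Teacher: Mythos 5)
The paper does not actually prove this lemma --- it is quoted from \cite{Gaset2021} --- so your proposal can only be judged on its own merits. Your main line (push ${\bf \Gamma}$ forward by the flow $\Phi_\epsilon$ of $X$, use the dynamical-symmetry hypothesis to conclude that $(\Phi_\epsilon)_*{\bf \Gamma}$ still satisfies $i_{(\Phi_\epsilon)_*\Gamma_\alpha}\eta^\alpha_L=-E_L$, and differentiate at $\epsilon=0$) is sound, and it does prove the statement for every \emph{integrable} ${\bf \Gamma}\in\X^k_L(\oplus^k\T Q\times\R^k)$ whose integral sections are first prolongations (for instance, for regular $L$, by Lemma \ref{lema}); this is exactly the situation in which the lemma is invoked later, in the proof of Theorem \ref{thm:noether}. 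One caveat: since the paper reserves the word ``solution'' for first prolongations, you should read the hypothesis, as the paper itself does in the proof of Proposition \ref{contact-dinamica}, as saying that $\Phi_\epsilon\circ\phi_s^{(1)}$ satisfies the section-form equations \eqref{eq:k-contact-Euler-Lagrange-section}; with that reading your substitution step is correct, and you only need the second of those equations, as you note.

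The genuine gap is the one you flag yourself: the lemma is asserted for \emph{every} ${\bf \Gamma}\in\X^k_L(\oplus^k\T Q\times\R^k)$, and neither of your two patches closes it. The $1$-jet reduction requires, through each point, an \emph{integrable} element of $\X^k_L(\oplus^k\T Q\times\R^k)$ with prescribed $1$-jet; but integrability imposes the additional conditions $[\Gamma_\alpha,\Gamma_\beta]=0$, and integrable solutions of \eqref{eq:k-contact-Lagrangian-fields} need not exist at all, let alone with arbitrarily prescribed jets, so this realization cannot be taken for granted. Your alternative identity $i_{[X,\Gamma_\alpha]}\eta^\alpha_L=-X(E_L)-i_{\Gamma_\alpha}\Lie_X\eta^\alpha_L$ is algebraically correct, and it usefully shows that the left-hand side depends on ${\bf \Gamma}$ only through its pointwise values rather than its $1$-jet; but the assertion that its right-hand side ``vanishes as a consequence of the dynamical-symmetry hypothesis'' is precisely the lemma restated. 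For a dynamical symmetry, unlike a $k$-contact symmetry, neither $X(E_L)$ nor $\Lie_X\eta^\alpha_L$ vanishes separately, and you offer no independent argument for the vanishing of their combination, so this route is circular. As it stands, your proof is complete only for integrable ${\bf \Gamma}$ with holonomic integral sections.
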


For the case $k=1$ it is known that dynamical symmetries induce dissipated quantities, and these results are known as
dissipation theorems (see \cite{Gaset2020}).

In the general case $k>1$, in Theorem 3 in \cite{Gaset2021}, it is proved that if $X$ is an infinitesimal dynamical symmetry, the functions $F^{\alpha}=- i_X \eta^\alpha_L$ satisfy \eqref{xifa}, and give a dissipation law for  integral sections of ${\bf \Gamma} \in \mathfrak{X}^k_L(\oplus^k \T Q \times \R^k)$.
     
The next result states, again with an extra condition, that 
these functions $F^{\alpha}$ are a dissipation law for every solution of the $k$-contact Euler--Lagrange  equations \eqref{eq:k-contact-Euler--Lagrange-equations}.

\begin{theorem}[Dissipation theorem]\label{thm:noether}
Let $L\in \Cinfty(\oplus^k\T Q\times\R^k)$ be a Lagrangian function and consider an infinitesimal dynamical symmetry $X\in \mathfrak{X}(\oplus^k\T Q\times\R^k)$ satisfying the conditions
\begin{equation}\label{hypdisipth}
    i_X \d\eta^\alpha_L =  \d(- i_X \eta^\alpha_L)\,,
    \end{equation}
    or, equivalently, $\Lie_X \eta^\alpha_L = 0$. Then, the functions
	\begin{equation} 
	F^{\alpha}=- i_X \eta^\alpha_L \label{conslaw0}
	\end{equation}
	provide a dissipation law for the $k$-contact Euler--Lagrange equations.
\end{theorem}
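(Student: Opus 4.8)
The plan is to reduce the statement to Lemma \ref{lem:12}, which already isolates exactly what must be checked for a family of functions of the prescribed form to be a dissipation law. The strategy has two ingredients: first, recognize that the hypothesis \eqref{hypdisipth} is nothing but the existence condition \eqref{ixoa} appearing in Lemma \ref{lem:12}; second, verify the remaining algebraic condition in that lemma using that $X$ is a dynamical symmetry.

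First I would set $F^\alpha = -i_X\eta^\alpha_L$ and record, via Cartan's magic formula $\Lie_X\eta^\alpha_L = i_X\d\eta^\alpha_L + \d\,i_X\eta^\alpha_L$, that the two formulations of the hypothesis in \eqref{hypdisipth} are indeed equivalent. In either form the hypothesis reads $i_X\d\eta^\alpha_L = \d F^\alpha$ for each $\alpha=1,\dots,k$, which is precisely condition \eqref{ixoa}. Hence Lemma \ref{lem:12} applies with this $X$ and these $F^\alpha$, and it tells us that $F=(F^\alpha)$ is a dissipation law for the $k$-contact Euler--Lagrange equations if, and only if, $\Gamma_\alpha(F^\alpha)=\parder{L}{s^\alpha}F^\alpha$ holds for every integrable {\sc sopde} ${\bf\Gamma}\in\X^k_L(\oplus^k\T Q\times\R^k)$.

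Second I would establish this last identity from the dynamical-symmetry assumption, which is the content quoted as Theorem~3 in \cite{Gaset2021}. Writing each summand as $\Gamma_\alpha(F^\alpha) = -\Lie_{\Gamma_\alpha}(i_X\eta^\alpha_L)$ (legitimate since $i_X\eta^\alpha_L$ is a function), I would use the commutation rule $\Lie_{\Gamma_\alpha}\circ i_X - i_X\circ\Lie_{\Gamma_\alpha} = i_{[\Gamma_\alpha,X]}$ to obtain
$$\Lie_{\Gamma_\alpha}(i_X\eta^\alpha_L) = i_X(\Lie_{\Gamma_\alpha}\eta^\alpha_L) + i_{[\Gamma_\alpha,X]}\eta^\alpha_L\,.$$
Because $X$ is an infinitesimal dynamical symmetry, Lemma \ref{lemadynsim} gives $i_{[X,\Gamma_\alpha]}\eta^\alpha_L=0$, so the commutator term vanishes; and Proposition \ref{prop:EL-sopde} supplies $\Lie_{\Gamma_\alpha}\eta^\alpha_L = \parder{L}{s^\alpha}\eta^\alpha_L$. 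Contracting the latter with $X$ and using linearity of the interior product then yields $\Gamma_\alpha(F^\alpha) = -i_X\!\left(\parder{L}{s^\alpha}\eta^\alpha_L\right) = \parder{L}{s^\alpha}F^\alpha$, which is exactly condition \eqref{xifa}. Combining this with the first step and invoking Lemma \ref{lem:12} completes the proof.

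The main point requiring care is the bookkeeping of the Einstein summation over $\alpha$ together with the noncommutativity of $\Lie$ and $i$: one must check that the summed identity of Proposition \ref{prop:EL-sopde} remains sufficient after contraction with $X$, and that the bracket term is annihilated index-by-index by Lemma \ref{lemadynsim}. Beyond this Cartan-calculus manipulation, no genuinely new computation is needed, as the whole argument is a direct assembly of Lemmas \ref{lem:12} and \ref{lemadynsim} and Proposition \ref{prop:EL-sopde}.
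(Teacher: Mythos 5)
Your proposal is correct and follows essentially the same route as the paper: the paper likewise computes $\Gamma_\alpha(-i_X\eta^\alpha_L) = -i_X\Lie_{\Gamma_\alpha}\eta^\alpha_L - i_{[\Gamma_\alpha,X]}\eta^\alpha_L = \frac{\partial L}{\partial s^\alpha}(-i_X\eta^\alpha_L)$ using Lemma \ref{lemadynsim} and Proposition \ref{prop:EL-sopde}, and then concludes via the hypothesis \eqref{hypdisipth} and Lemma \ref{lem:12}. Your additional remarks on the equivalence of the two forms of \eqref{hypdisipth} via Cartan's formula and on the summation bookkeeping are consistent with, and slightly more explicit than, the paper's presentation.
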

\begin{proof}
Let ${\bf \Gamma} \in \mathfrak{X}^k_L(\oplus^k \T Q \times \R^k)$ be an integrable {\sc sopde}. From Lemma \ref{lemadynsim}, we obtain
$$\Gamma_\alpha(- i_X \eta^\alpha_L) = - \Lie_{\Gamma_\alpha} i_X \eta^\alpha_L = - i_X \Lie_{\Gamma_\alpha} \eta^\alpha_L - i_{[\Gamma_{\alpha},X]} \eta^\alpha_L = \ds\frac{\partial L}{\partial s^\alpha} (- i_X \eta^\alpha_L) .$$
Now, from \eqref{hypdisipth} and Lemma \ref{lem:12}, we deduce that the functions $F^\alpha = - i_X \eta^\alpha_L$ give a dissipation law.
\end{proof}

\subsection{\texorpdfstring{$k$}--contact symmetries}
  
Among the most relevant symmetries are those that leave the geometric structures invariant. Let us recall the definition of $k$-contact symmetries for a Lagrangian function $L\in\Cinfty(\oplus^k\T Q\times\R^k)$ (see \cite{rsv07}).

\begin{definition}
    A diffeomorphism $\Phi:\oplus^k\T Q\times\R^k\to \oplus^k\T Q\times\R^k$ is called a \textbf{Lagrangian $k$-contact symmetry} if
    $$\Phi^*\eta^{\alpha}_L=\eta^{\alpha}_L\,,\qquad\Phi^*E_L=E_L\,. $$
    A vector field $ X\in
    \mathfrak{X}(\oplus^k\T Q\times\R^k)$ is called an \textbf{infinitesimal Lagrangian $k$-contact symmetry}  if
\begin{equation}\label{last}
    \Lie_X\eta^{\alpha}_L=0\,, \qquad \Lie_X E_L=0\,,
\end{equation}    
 that is, its local flow is made of Lagrangian $k$-contact symmetries.
\end{definition}

One can obtain the relation between these symmetries that preserve geometric structures and those that preserve the solutions of the system as follows.

\begin{proposition}\label{contact-dinamica}
  (Infinitesimal) Lagrangian $k$-contact symmetries are (infinitesimal)
Lagrangian dynamical symmetries.
\end{proposition}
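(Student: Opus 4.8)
The plan is to prove the statement for a finite Lagrangian $k$-contact symmetry $\Phi$; the infinitesimal version then follows at once, since $\Lie_X\eta^\alpha_L=0$ and $\Lie_X E_L=0$ say exactly that the local flow of $X$ consists of finite $k$-contact symmetries, each of which will be a dynamical symmetry. Working with a regular Lagrangian, I would recast the problem in terms of $k$-vector fields. By Lemma \ref{lema} the solutions realized by the formalism are the integral sections $\phi_s^{(1)}$ of integrable {\sc sopde}s ${\bf\Gamma}=(\Gamma_1,\dots,\Gamma_k)\in\X^k_L(\oplus^k\T Q\times\R^k)$ (keeping in mind the caveat of the Remark following Proposition \ref{known2}, that not every solution need arise this way). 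Since $\Phi\circ\phi_s^{(1)}$ is an integral section of the pushforward $\Phi_*{\bf\Gamma}=(\Phi_*\Gamma_\alpha)$, it suffices to prove $\Phi_*{\bf\Gamma}\in\X^k_L(\oplus^k\T Q\times\R^k)$: indeed $\Phi_*{\bf\Gamma}$ is integrable (the pushforward of an integrable $k$-vector field is integrable), and by Lemma \ref{lema} membership in $\X^k_L$ forces it to be a {\sc sopde}, so its integral sections are automatically first prolongations of solutions. This is what disposes of the holonomy issue without extra work.

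To show $\Phi_*{\bf\Gamma}\in\X^k_L$ I would use the characterization \eqref{eq:lieeta} of Proposition \ref{prop:EL-sopde} and pull back along $\Phi$. As $\Phi_*\Gamma_\alpha$ is $\Phi$-related to $\Gamma_\alpha$, one has $\Phi^*\big(\Lie_{\Phi_*\Gamma_\alpha}\eta^\alpha_L\big)=\Lie_{\Gamma_\alpha}(\Phi^*\eta^\alpha_L)$ and $\Phi^*\big(i_{\Phi_*\Gamma_\alpha}\eta^\alpha_L\big)=i_{\Gamma_\alpha}(\Phi^*\eta^\alpha_L)$. Using $\Phi^*\eta^\alpha_L=\eta^\alpha_L$ and $\Phi^*E_L=E_L$, the second equation of \eqref{eq:lieeta} transports immediately, and the first reduces to $\sum_\alpha\Lie_{\Gamma_\alpha}\eta^\alpha_L=\sum_\alpha\frac{\partial L}{\partial s^\alpha}\eta^\alpha_L$, which holds by hypothesis, \emph{provided the coefficients match}. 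This is where the only genuine difficulty appears, and I expect it to be the main obstacle: the factor $\partial L/\partial s^\alpha$ is not manifestly geometric, and reproducing the equation for $\Phi_*{\bf\Gamma}$ requires $\Phi^*(\partial L/\partial s^\alpha)=\partial L/\partial s^\alpha$, because the $\eta^\alpha_L$ are pointwise linearly independent.

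I would settle this invariance through the identity $\partial L/\partial s^\alpha=-\Reeb_\alpha(E_L)$, obtained by contracting the first equation of \eqref{eq:k-contact-Lagrangian-fields} with the Reeb vector fields $\Reeb_\beta$ of the $k$-contact structure (characterized by $i_{\Reeb_\beta}\eta^\alpha_L=\delta^\alpha_\beta$ and $i_{\Reeb_\beta}\d\eta^\alpha_L=0$, so the $\d\eta^\alpha_L$-terms drop and $i_{\Reeb_\beta}\big(\d E_L+\frac{\partial L}{\partial s^\alpha}\eta^\alpha_L\big)=\Reeb_\beta(E_L)+\frac{\partial L}{\partial s^\beta}=0$). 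Since the Reeb fields are uniquely determined by the forms $\eta^\alpha_L$, any $\Phi$ with $\Phi^*\eta^\alpha_L=\eta^\alpha_L$ satisfies $\Phi_*\Reeb_\alpha=\Reeb_\alpha$; together with $\Phi^*E_L=E_L$ this gives $\Phi^*(\Reeb_\alpha(E_L))=\Reeb_\alpha(E_L)$, i.e. $\Phi^*(\partial L/\partial s^\alpha)=\partial L/\partial s^\alpha$, closing the argument. For the infinitesimal statement the same computation reads $\Lie_X\Reeb_\alpha=0$ (since $i_{\Lie_X\Reeb_\alpha}\eta^\beta_L=0$ and $i_{\Lie_X\Reeb_\alpha}\d\eta^\beta_L=0$ force it), whence $X(\partial L/\partial s^\alpha)=-\Lie_X(\Reeb_\alpha(E_L))=0$, and differentiating \eqref{eq:lieeta} along $X$ then shows that $\Phi_{t*}{\bf\Gamma}$ remains in $\X^k_L$, so $X$ is an infinitesimal dynamical symmetry.
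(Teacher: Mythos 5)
Your proposal takes a genuinely different route from the paper, and that route introduces a gap you notice but do not close. The paper argues directly at the level of sections: for an arbitrary solution $\psi=\phi_s^{(1)}$ of \eqref{eq:k-contact-Euler-Lagrange-section} it checks that $\Phi\circ\psi$ satisfies the same equations, using only $\Phi^*\eta^\alpha_L=\eta^\alpha_L$ and $\Phi^*E_L=E_L$ (the point being that $i_{\T\Phi(v)}\omega$ at $\Phi(p)$ corresponds under $\T_p\Phi$ to $i_v(\Phi^*\omega)$ at $p$). You instead push forward $k$-vector fields and show $\Phi_*\mathbf{\Gamma}\in\X^k_L(\oplus^k\T Q\times\R^k)$. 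But Definition \ref{dinamica-difeo} demands that $\Phi\circ\phi_s^{(1)}$ be a solution for \emph{every} solution of \eqref{eq:k-contact-Euler--Lagrange-equations}, and — as the Remark following Proposition \ref{known2} states and as you yourself note parenthetically — a solution need not be an integral section of any $\mathbf{\Gamma}\in\X^k_L(\oplus^k\T Q\times\R^k)$. Proving that $\Phi_*\mathbf{\Gamma}$ stays in $\X^k_L$ therefore only covers a possibly proper subclass of solutions, so the argument as written does not establish the proposition. You also assume regularity, which the statement does not.

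That said, the one issue you flag as the ``only genuine difficulty'' — whether $\Phi^*(\partial L/\partial s^\alpha)=\partial L/\partial s^\alpha$ — is a real subtlety that is equally present in the paper's section-level argument (pulling back the term $\tparder{L}{s^\alpha}\,\eta^\alpha_L$ produces $\Phi^*(\tparder{L}{s^\alpha})\,\eta^\alpha_L$, and the paper passes over this in silence), and your resolution via Reeb vector fields — $\tparder{L}{s^\alpha}=-\Reeb_\alpha(E_L)$, $\Phi_*\Reeb_\alpha=\Reeb_\alpha$ by uniqueness of the Reeb fields of $(\eta^\alpha_L)$, hence the invariance — is correct for regular $L$ and is precisely what is needed to make the direct argument airtight. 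The repair is therefore to keep your Reeb lemma but discard the detour through $\X^k_L$: apply the pullback computation to an arbitrary solution $\psi$ of \eqref{eq:k-contact-Euler-Lagrange-section}, with $v=\T_t\psi\left(\parder{}{t^\alpha}\big\vert_t\right)$, exactly as in the paper's proof; your invariance of $\tparder{L}{s^\alpha}$ then closes the coefficient-matching step for every solution at once, not just for integral sections.
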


\begin{proof}
Let $\psi=\phi_s^{(1)}$ be a solution of \eqref{eq:k-contact-Euler-Lagrange-section}, 
and $\Phi$ a Lagrangian $k$-contact symmetry
\begin{equation}\label{k-contsym}
   \Phi^*\eta^{\alpha}_L=\eta^{\alpha}_L\,, \qquad \Phi^*E_L=E_L\,.
\end{equation}
We need to prove that  
\begin{equation}\label{saturday}
i_{(\Phi\circ\psi)'_\alpha}\d\eta^\alpha_L = \left( \d E_L  +
\dparder{L}{s^\alpha} \eta^\alpha_L \right)\circ(\Phi\circ \psi)\,,\qquad
i_{(\Phi\circ\psi)'_\alpha}\eta^\alpha_L = -E_L\circ(\Phi\circ \psi)\,,
\end{equation}
where $ (\Phi\circ\psi)'_\alpha=(\Phi\circ \psi)_*(t)(\parder{}{t^\alpha}) \, ,$
which is a consequence of \eqref{eq:k-contact-Euler-Lagrange-section} and \eqref{k-contsym}.

If $X$ is an infinitesimal $k$-contact symmetry, its flow $\phi_t$ is made of local $k$-contact symmetries, namely $\phi_t^*\eta^{\alpha}_L=\eta^{\alpha}_L$ and $\phi_t^*E_L=E_L$.
Thus $\phi_t$ satisfies \eqref{saturday}, and then it transforms solutions to the $k$-contact Euler--Lagrange equations on solutions to the $k$-contact Euler--Lagrange equations.
\end{proof}

From Dissipation Theorem \ref{thm:noether} and 
Proposition \ref{contact-dinamica} we deduce that
\begin{corollary}\label{cafe}
If $X$ is an infinitesimal Lagrangian $k$-contact symmetry, the functions $F^\alpha= -i_X \eta^\alpha_L$ give a dissipation law for the $k$-contact Euler--Lagrange equations \eqref{eq:k-contact-Euler--Lagrange-equations}.
\end{corollary}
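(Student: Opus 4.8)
The plan is to deduce this directly by combining Proposition \ref{contact-dinamica} with the Dissipation Theorem \ref{thm:noether}, so the whole argument reduces to checking that an infinitesimal Lagrangian $k$-contact symmetry meets both hypotheses required there. First I would unpack the definition: if $X$ is an infinitesimal Lagrangian $k$-contact symmetry, then by \eqref{last} it satisfies $\Lie_X \eta^\alpha_L = 0$ and $\Lie_X E_L = 0$. In fact the second identity will not be needed for this conclusion; it is the first, $\Lie_X \eta^\alpha_L = 0$, that carries all the weight.

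Next I would invoke Proposition \ref{contact-dinamica}, which asserts that every infinitesimal $k$-contact symmetry is in particular an infinitesimal dynamical symmetry. This immediately supplies the first hypothesis of Theorem \ref{thm:noether}. It then only remains to verify the extra condition \eqref{hypdisipth}, namely $i_X \d\eta^\alpha_L = \d(-i_X \eta^\alpha_L)$. This is where the sole (entirely routine) computation sits: by Cartan's magic formula one has $\Lie_X \eta^\alpha_L = i_X \d\eta^\alpha_L + \d(i_X \eta^\alpha_L)$, so the vanishing $\Lie_X \eta^\alpha_L = 0$ is exactly equivalent to $i_X \d\eta^\alpha_L = \d(-i_X \eta^\alpha_L)$. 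With both hypotheses in hand, Theorem \ref{thm:noether} applies verbatim and produces that the functions $F^\alpha = -i_X \eta^\alpha_L$ form a dissipation law for the $k$-contact Euler--Lagrange equations.

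I do not expect any genuine obstacle here, since each step is a direct citation of a result already established in this section. The only substantive content is the Cartan-calculus equivalence between $\Lie_X \eta^\alpha_L = 0$ and \eqref{hypdisipth}, and this is precisely the equivalence already recorded in the parenthetical ``or, equivalently'' clause in the statement of Theorem \ref{thm:noether}; hence the corollary is an immediate specialisation of the Dissipation Theorem to symmetries that preserve the contact one-forms.
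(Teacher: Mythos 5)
Your proposal is correct and follows exactly the route the paper takes: the corollary is stated in the paper as a direct consequence of Proposition \ref{contact-dinamica} together with the Dissipation Theorem \ref{thm:noether}, with the hypothesis \eqref{hypdisipth} supplied by $\Lie_X\eta^\alpha_L=0$ via Cartan's formula. Your write-up merely makes explicit the verification that the paper leaves implicit.
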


\begin{example}\label{examplekcontsym}
Consider again Example \ref{exmpl:damped-vibrating-string} (the damped vibrating string). The vector field $X= \dfrac{\partial}{\partial q}$ is an infinitesimal Lagrangian $k$-contact symmetry, and therefore, the map given by
$$ F = \left(-i_{\parder{}{q}} \eta^t_L, -i_{\parder{}{q}} \eta^x_L\right) = \left( \rho v_t, - \tau v_x\right) $$
is a dissipation law.
\end{example}

\subsection{Dissipation laws given by vector fields which are not symmetries}

Finally, let us consider any vector field $Z\in \X(\oplus^k\T Q\times\R^k)$ satisfying the conditions
\begin{equation}\label{Cartnsym}
 \Lie_Z\eta^{\alpha}_L=  \d g^\alpha\,,\quad  \quad \Lie_ZE_L = - g^\alpha \ds\frac{\partial L}{\partial s^\alpha}\,,  
\end{equation}
for some functions $g^\alpha \in \Cinfty(\oplus^k\T Q\times\R^k)$. In the following result we show how to associate dissipation laws to these particular vector fields.

\begin{theorem}\label{disipthc}
Let $L:\oplus^k\T Q\times\R^k\to\R$ be a Lagrangian function and consider a vector field $Z \in \mathfrak{X}(\oplus^k \T Q \times \R^k)$ satisfying conditions \eqref{Cartnsym}. Then, the functions
$$ F^\alpha= g^\alpha - i_Z \eta^\alpha_L $$
give a dissipation law for the $k$-contact Euler--Lagrange equations.
\end{theorem}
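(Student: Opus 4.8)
The plan is to apply Lemma \ref{lem:12} with the auxiliary vector field taken to be $Z$ itself. First I would check that $Z$ verifies the hypothesis \eqref{ixoa} of that lemma for the proposed functions $F^\alpha = g^\alpha - i_Z\eta^\alpha_L$. By Cartan's magic formula $\Lie_Z\eta^\alpha_L = i_Z\d\eta^\alpha_L + \d(i_Z\eta^\alpha_L)$, so the first condition in \eqref{Cartnsym}, namely $\Lie_Z\eta^\alpha_L = \d g^\alpha$, rearranges (for each index $\alpha$) into
$$i_Z\d\eta^\alpha_L = \d g^\alpha - \d(i_Z\eta^\alpha_L) = \d(g^\alpha - i_Z\eta^\alpha_L) = \d F^\alpha .$$
Hence Lemma \ref{lem:12} is applicable, and it reduces the whole statement to verifying the pointwise identity $\Gamma_\alpha(F^\alpha) = \dparder{L}{s^\alpha}F^\alpha$ for every integrable {\sc sopde} ${\bf \Gamma}\in\X^k_L(\oplus^k\T Q\times\R^k)$.

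To establish that identity I would compute $\Gamma_\alpha(F^\alpha)$ directly. Since each $F^\alpha$ is a function, $\Gamma_\alpha(F^\alpha) = i_{\Gamma_\alpha}\d F^\alpha$, and substituting the relation just obtained together with the antisymmetry of the two-form $\d\eta^\alpha_L$ (which swaps the two interior products at the cost of a sign) and then the defining geometric equation \eqref{eq:k-contact-Lagrangian-fields} for ${\bf\Gamma}$, gives
$$\Gamma_\alpha(F^\alpha) = i_{\Gamma_\alpha}i_Z\d\eta^\alpha_L = -\,i_Z\big(i_{\Gamma_\alpha}\d\eta^\alpha_L\big) = -\,i_Z\Big(\d E_L + \dparder{L}{s^\alpha}\eta^\alpha_L\Big),$$
where the sum over $\alpha$ is understood throughout. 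Finally I would invoke the second condition in \eqref{Cartnsym}: since $i_Z\d E_L = Z(E_L) = \Lie_Z E_L = -\,g^\alpha\dparder{L}{s^\alpha}$, substitution yields
$$\Gamma_\alpha(F^\alpha) = -\,i_Z\d E_L - \dparder{L}{s^\alpha}\,i_Z\eta^\alpha_L = g^\alpha\dparder{L}{s^\alpha} - \dparder{L}{s^\alpha}\,i_Z\eta^\alpha_L = \dparder{L}{s^\alpha}\big(g^\alpha - i_Z\eta^\alpha_L\big) = \dparder{L}{s^\alpha}F^\alpha ,$$
which is exactly the identity demanded by Lemma \ref{lem:12}; the conclusion then follows.

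The argument is short and essentially forced once one observes that $Z$ itself can serve as the vector field $X$ in Lemma \ref{lem:12}, so the real engine of the proof is already packaged in that lemma (together with Proposition \ref{prop:EL-sopde}). The only places requiring genuine care are the sign arising from commuting $i_{\Gamma_\alpha}$ and $i_Z$ on $\d\eta^\alpha_L$, and bookkeeping which of the relations in \eqref{Cartnsym} and \eqref{eq:k-contact-Lagrangian-fields} hold per-index versus under the Einstein summation over $\alpha$ — in particular that $\Lie_Z\eta^\alpha_L=\d g^\alpha$ is a family of equations, one for each $\alpha$, whereas $\Lie_Z E_L = -g^\alpha\dparder{L}{s^\alpha}$ and the first line of \eqref{eq:k-contact-Lagrangian-fields} carry the summation. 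The main conceptual obstacle, had Lemma \ref{lem:12} not been available, would have been bridging the global divergence condition \eqref{disipkmaps} (which must hold for \emph{all} solutions $\phi_s$, including those not arising as integral sections of any ${\bf\Gamma}$) to the pointwise computation against integrable {\sc sopde}s; here that gap is precisely what Lemma \ref{lem:12} resolves.
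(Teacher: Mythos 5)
Your proof is correct, and while it shares the paper's overall strategy (reduce everything to Lemma \ref{lem:12}), the way you verify the key identity $\Gamma_\alpha(F^\alpha)=\dparder{L}{s^\alpha}F^\alpha$ is genuinely different and arguably tidier. The paper works with the Lie-derivative form of the field equations from Proposition \ref{prop:EL-sopde}, expands $\Lie_{\Gamma_\alpha}i_Z\eta^\alpha_L$ and $i_{\Gamma_\alpha}\Lie_Z\eta^\alpha_L$ separately, and relies on the bracket terms $i_{[\Gamma_\alpha,Z]}\eta^\alpha_L$ cancelling between the two computations. You instead observe that Cartan's formula turns the hypothesis $\Lie_Z\eta^\alpha_L=\d g^\alpha$ directly into $i_Z\d\eta^\alpha_L=\d F^\alpha$, then contract with $\Gamma_\alpha$, flip the two interior products on the two-form, and apply the interior-product form \eqref{eq:k-contact-Lagrangian-fields} of the field equations together with $\Lie_Z E_L=-g^\alpha\dparder{L}{s^\alpha}$; no brackets ever appear. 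An added benefit of your route is that the identity $i_Z\d\eta^\alpha_L=\d F^\alpha$ is exactly the hypothesis \eqref{ixoa} required to invoke Lemma \ref{lem:12}, so you verify it explicitly, whereas the paper's proof invokes the lemma without spelling out why its hypothesis holds (it does, for the same reason you give, with $X=Z$). Your closing remarks on which relations carry the Einstein summation and on the role of Lemma \ref{lem:12} in bridging to arbitrary solutions are accurate.
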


\begin{proof}
Consider an integrable {\sc sopde} ${\bf \Gamma} \in \mathfrak{X}^k_L(\oplus \T Q \times \R^k)$. Then, by Proposition \ref{prop:EL-sopde}, the {\sc sopde} ${\bf \Gamma}$ satisfies equations \eqref{eq:lieeta}, so we compute
\begin{align}\label{eqproof1}
\begin{split}
 \Gamma_\alpha(F^\alpha) &= \Gamma_\alpha(g^\alpha) - \Lie_{\Gamma_\alpha} i_Z \eta^\alpha_L = \Gamma_\alpha(g^\alpha) - i_Z \Lie_{\Gamma_\alpha} \eta^\alpha_L - i_{[\Gamma_\alpha,Z]} \eta^\alpha_L \\
 &= \Gamma_\alpha(g^\alpha) - \ds\frac{\partial L}{\partial s^\alpha} i_Z \eta^\alpha_L - i_{[\Gamma_\alpha,Z]} \eta^\alpha_L \,.   
 \end{split}
\end{align}
Contracting $\Lie_Z \eta^\alpha_L$ with $\Gamma_\alpha$ and summing for $\alpha$, we have
\begin{equation}\label{eqproof2}
i_{\Gamma_\alpha} \Lie_Z \eta^\alpha_L = i_{\Gamma_\alpha} (\d g^\alpha) =  \Gamma_\alpha(g^\alpha)\,,    
\end{equation}
and on the other side
\begin{equation}\label{eqproof3}
 i_{\Gamma_\alpha} \Lie_Z \eta^\alpha_L = \Lie_Z i_{\Gamma_\alpha} \eta^\alpha_L - i_{[Z,\Gamma_\alpha]} \eta^\alpha_L = - \Lie_Z E_L + i_{[\Gamma_\alpha,Z]} \eta^\alpha_L =  \ds\frac{\partial L}{\partial s^\alpha} g^\alpha + i_{[\Gamma_\alpha,Z]} \eta^\alpha_L \,.   
\end{equation}
Therefore, from \eqref{eqproof2} and \eqref{eqproof3} we deduce that
\begin{equation}\label{eqproof4}
  \Gamma_\alpha(g^\alpha) = \ds\frac{\partial L}{\partial s^\alpha} g^\alpha + i_{[\Gamma_\alpha,Z]} \eta^\alpha_L\,.
\end{equation}
Finally, substituting the above expression in equation \eqref{eqproof1} we obtain
$$\Gamma_\alpha(g^\alpha - i_Z \eta^\alpha_L) = \ds\frac{\partial L}{\partial s^\alpha} (g^\alpha - i_Z \eta^\alpha_L) ,$$
and by Lemma \ref{lem:12} the functions $F^\alpha= g^\alpha - i_Z \eta^\alpha_L$ give a dissipation law.
\end{proof}

\begin{remark}
When the functions $g^\alpha=0$, for every $\alpha=1,\ldots,k$, equations \eqref{Cartnsym} reduce to equations \eqref{last} of infinitesimal Lagrangian $k$-contact symmetries, and the above theorem becomes Corollary \ref{cafe}.
\end{remark}
To illustrate the above theorem we will give an example of a vector field satisfying the conditions  \eqref{Cartnsym}.
\begin{example}
Let us consider again the equations modelling a damped vibrating string given by the Lagrangian function \eqref{eq:Lagrangian-vibrating-string-dis}. The vector field $Z = \ds\frac{\partial}{\partial q} + \ds\frac{\partial}{\partial s^t} + g^x \ds\frac{\partial}{\partial s^x}$ satisfies
$$ \Lie_Z \eta^t_L = 0\,, \qquad \Lie_Z \eta^x_L = \d g^x\,, \qquad \Lie_Z E_L = - \ds\frac{\partial L}{\partial s^t} g^t - \ds\frac{\partial L}{\partial s^x} g^x\,, $$
where the function $g^t$ is constant, $g^t=c$, and $g^x$ is an arbitrary function.
         
By Theorem \ref{disipthc}, the corresponding dissipation law is the map given by $$F = \left(k - i_Z \eta^t_L,g^x - i_Z \eta^x_L\right) = ( \rho v_t, - \tau v_x) \,, $$
which coincides with the dissipation law given by the vector field $\ds\frac{\partial}{\partial q}$ in the Example \ref{examplekcontsym}.
\end{example}

\begin{remark}
When the Lagrangian $L$ does not depend on the variables $s^\alpha$, for $\alpha=1,\dotsc,k$, the function $L$ is defined on the bundle $\tkq$ and, if we think of $Z$ as vector field on $\oplus^k\T Q$, the conditions \eqref{Cartnsym} are now
$$
    \Lie_Z \d\eta^{\alpha}_L= 0\,,\qquad \Lie_ZE_L = 0\,,
$$
which is the definition of a Cartan symmetry for the $k$-symplectic system $(\oplus^k\T Q, \d \eta^{\alpha}_L,E_L)$, see \cite{rsv07}.
Moreover, the above Theorem \ref{disipthc} becomes the Noether's Theorem 3.13 in \cite{rsv07}.
\end{remark}

\subsection{Newtonoid vector fields}\label{subnewt}

In \cite{BBS} the set of Newtonoid
vector fields were introduced in the framework of the $k$-symplectic formulation of autonomous first-order field theories, extending the work of G. Marmo and N. Mukunda in \cite{marmo86} for the case $k = 1$. In this section, we introduce this kind of vector fields in the $k$-contact framework. 

In Proposition \ref{prop:csn} we prove that, for a regular Lagrangian function $L$, infinitesimal $k$-contact symmetries are Newtonoid vector fields for every corresponding {\sc sopde} ${\bf \Gamma} \in
\X^k_L(\oplus^k\T Q\times\R^k)$. Finally, we observe that a particular kind of Newtonoid vector fields that leave the Lagrangian function invariant are also infinitesimal $k$-contact symmetries and hence they provide dissipation laws.
\begin{definition}
    Consider a fixed {\sc sopde} ${\bf \Gamma}$. A vector field $X\in\mathfrak{X}(\oplus^k\T Q\times\R^k)$ is {\bf Newtonoid} with respect to ${\bf \Gamma}$ if $J^\alpha\left(\left[\Gamma_\alpha, X\right]\right)=0$. We denote by $\mathfrak{X}_\Gamma$ the set of Newtonoid vector fields associated to a \textsc{sopde} ${\bf\Gamma}$.
\end{definition}
Since
\begin{align}
    \left[\Gamma_\alpha, X\right] &= \left[v_\alpha^i \frac{\partial}{\partial q^i} + \Gamma^i_{\alpha\beta} \frac{\partial}{\partial v^i_{\beta}} + \Gamma_{\alpha}^\beta \frac{\partial}{\partial s^{\beta}}, X^i \frac{\partial}{\partial q^i} + X^i_{\gamma} \frac{\partial}{\partial v^i_{\gamma}} + X^{\gamma} \frac{\partial}{\partial s^{\gamma}} \right ] \nonumber\\  %\noalign{\bigskip}
&= \left( \Gamma_\alpha(X^i) -X^i_{\alpha}\right)
\frac{\partial}{\partial q^i} + \left( \Gamma_\alpha(X^i_{\beta})
  -X(\Gamma^i_{\alpha\beta})\right)\frac{\partial}{\partial v^i_\beta}+ \left( \Gamma_\alpha(X^{\beta})
  -X(\Gamma^\beta_\alpha)\right)\frac{\partial}{\partial s^\beta}\,, \label{ms0} 
\end{align}
we deduce that  $J^\alpha\left(\left[\Gamma_\alpha, X\right]\right)=0$ if, and only if, $\Gamma_\alpha(X^i) = X^i_\alpha$. Therefore, a Newtonoid vector field $X$ can be written locally as follows
\begin{equation}
X=X^i\frac{\partial}{\partial q^i} +
\Gamma_{\alpha}(X^i)\frac{\partial}{\partial v^i_{\alpha}}  + X^{\gamma} \frac{\partial}{\partial s^{\gamma}}\,.
 \label{xs2}
\end{equation}
From the local expressions \eqref{lifts-vectors1} and \eqref{xs2} it follows that the vector fields of the form
$$Z^C + X^\alpha \ds\frac{\partial}{\partial s^\alpha}\,,$$
where $Z^C\in
\X(\oplus^k\T Q\times\R^k)$ is the complete lift of a vector field $Z\in \X(Q)$ and $X^\alpha$ are arbitrary functions on $\oplus^k\T Q\times\R^k$, are
Newtonoid vector fields for an arbitrary {\sc sopde} ${\bf \Gamma}$.

In the next
proposition we see that the set of Newtonoid vector fields
contains also infinitesimal $k$-contact symmetries.

\begin{proposition} \label{prop:csn}
Let $X\in \X(\oplus^k\T Q\times\R^k)$ be an infinitesimal $k$-contact symmetry of a system described by a regular Lagrangian function $L:\oplus^k\T Q\times\R^k\to\R$. Then $X$ is a Newtonoid vector field for every ${\bf \Gamma} \in
\X^k_L(\oplus^k\T Q\times\R^k)$. 
\end{proposition}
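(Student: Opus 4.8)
The plan is to extract, from the two defining conditions of an infinitesimal $k$-contact symmetry $\Lie_X\eta^\alpha_L=0$ and $\Lie_X E_L=0$, the vanishing of the $\partial/\partial q^i$-components of $[\Gamma_\alpha,X]$; by \eqref{ms0} this is exactly the Newtonoid condition $\Gamma_\alpha(X^i)=X^i_\alpha$. Since $L$ is regular, Lemma \ref{lema} guarantees that any ${\bf\Gamma}\in\X^k_L(\oplus^k\T Q\times\R^k)$ is automatically a {\sc sopde} of the form \eqref{localsode1}, so the local expression \eqref{ms0} for $[\Gamma_\alpha,X]$ is available, and its $\partial/\partial q^i$-slot is $\Gamma_\alpha(X^i)-X^i_\alpha$.

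The first step is to differentiate the first geometric Lagrangian equation $i_{\Gamma_\alpha}\d\eta^\alpha_L=\d E_L+\frac{\partial L}{\partial s^\alpha}\eta^\alpha_L$ along $X$. Applying $\Lie_X$ and using the Cartan identity $[\Lie_X,i_{\Gamma_\alpha}]=i_{[X,\Gamma_\alpha]}$ together with $\Lie_X\d\eta^\alpha_L=\d\Lie_X\eta^\alpha_L=0$, $\Lie_X\d E_L=\d\Lie_X E_L=0$, and $\Lie_X\eta^\alpha_L=0$, every term containing $\Lie_X$ of a symmetry-invariant object drops out, leaving the single one-form identity
\begin{equation}
i_{[X,\Gamma_\alpha]}\d\eta^\alpha_L=X\left(\frac{\partial L}{\partial s^\alpha}\right)\eta^\alpha_L. \label{plan:star}
\end{equation}
It is essential to keep \eqref{plan:star} uncontracted: this is strictly stronger than the scalar relation obtained by contracting with $\Gamma_\alpha$ in Lemma \ref{lemadynsim}, which would yield only one equation rather than the full $nk$ component equations I need.

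The crucial second step is to compare the $\d v^j_\beta$-components on both sides of \eqref{plan:star}. Writing $A^i_\alpha:=X^i_\alpha-\Gamma_\alpha(X^i)$ for the $\partial/\partial q^i$-component of $[X,\Gamma_\alpha]$ (the quantity to be shown to vanish), I inspect $\d\eta^\alpha_L$ as given in \eqref{eq:contact-forms}: all three of its terms carry a $\d q$ factor, and on contraction with $[X,\Gamma_\alpha]$ only the middle term $\frac{\partial^2 L}{\partial v^j_\beta\partial v^i_\alpha}\d q^i\wedge\d v^j_\beta$ can produce a $\d v$, contributing $g^{\alpha\beta}_{ij}A^i_\alpha\,\d v^j_\beta$. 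On the right-hand side the forms $\eta^\alpha_L=\d s^\alpha-\frac{\partial L}{\partial v^i_\alpha}\d q^i$ contain no $\d v$ whatsoever. Matching the $\d v^j_\beta$ coefficients therefore gives $g^{\alpha\beta}_{ij}A^i_\alpha=0$ for every $j,\beta$.

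The final step invokes regularity: the Hessian $g^{\alpha\beta}_{ij}$ has maximal rank $nk$, i.e.\ it is an invertible $nk\times nk$ matrix in the paired indices $(i,\alpha)$ and $(j,\beta)$, so $g^{\alpha\beta}_{ij}A^i_\alpha=0$ forces $A^i_\alpha=0$, that is $\Gamma_\alpha(X^i)=X^i_\alpha$, which by \eqref{ms0} means $J^\alpha([\Gamma_\alpha,X])=0$; as ${\bf\Gamma}$ was arbitrary in $\X^k_L(\oplus^k\T Q\times\R^k)$, this proves $X$ is Newtonoid for every such ${\bf\Gamma}$. The main obstacle I anticipate is bookkeeping rather than conceptual: one must verify that, on contracting the complicated vector field $[X,\Gamma_\alpha]$ (whose $\partial/\partial v$ and $\partial/\partial s$ components are intricate) into $\d\eta^\alpha_L$, only its $\partial/\partial q$-component survives in the $\d v$-slot, and that the full strength of \emph{both} symmetry conditions is genuinely used — in particular $\Lie_X E_L=0$ is exactly what annihilates the $\d v$-terms hidden inside $\d E_L$, without which the clean cancellation in \eqref{plan:star} would fail.
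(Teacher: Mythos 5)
Your proposal is correct and follows essentially the same route as the paper's proof: use regularity and Lemma \ref{lema} to get the \textsc{sopde} form, apply $\Lie_X$ to the first geometric Lagrangian equation to obtain $i_{[\Gamma_\alpha,X]}\d\eta^\alpha_L=-X\bigl(\partial L/\partial s^\alpha\bigr)\eta^\alpha_L$, then match the $\d v^j_\beta$-coefficients and invoke the invertibility of the Hessian $g^{\alpha\beta}_{ij}$ to force $\Gamma_\alpha(X^i)=X^i_\alpha$. The only cosmetic difference is that the paper additionally records $i_{[X,\Gamma_\alpha]}\eta^\alpha_L=0$ from the second equation, which you rightly omit since it plays no role in the component comparison.
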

\begin{proof}
Since ${\bf \Gamma}
\in\X^k_L(\oplus^k\T Q\times\R^k$)
and $L$ is regular, from Lemma \ref{lema} it follows that
${\bf \Gamma}$ is a {\sc sopde}. Moreover, ${\bf \Gamma}$ is a solution to the equations
\begin{equation}\label{eq:fields}
    i_{\Gamma_\alpha}\d\eta_L^\alpha=\d E_L+\displaystyle\derpar{L}{s^\alpha}  \eta_L^\alpha \,, \quad 
i_{\Gamma_\alpha}\eta_L^\alpha=-E_L \,.
\end{equation}
Since $\Lie_X \eta^\alpha_L = 0$ and $\Lie_X E_L = 0$, and using the second equation in \eqref{eq:fields} it follows that
$$ 0 = \Lie_X E_L = - \Lie_X i_{\Gamma_\alpha} \eta^\alpha_L = - i_{\Gamma_\alpha} \Lie_X \eta^\alpha_L - i_{[X,\Gamma_\alpha]} \eta^\alpha_L =  - i_{[X,\Gamma_\alpha]} \eta^\alpha_L\,. $$
 Now, if we apply $\Lie_X$ to
both sides of the first equation in \eqref{eq:fields} and use the commutation rules, we obtain
\begin{equation}\label{cafe1}
\Lie_X i_{\Gamma_\alpha}\d\eta_L^\alpha = \Lie_X \d E_L + \Lie_X \left(\parder{L}{s^\alpha} \eta_L^\alpha\right) =  \Lie_X\left(\parder{L}{s^\alpha}\right) \eta_L^\alpha + \parder{L}{s^\alpha}\Lie_X \eta_L^\alpha = \Lie_X\left(\parder{L}{s^\alpha}\right)\eta_L^\alpha
\end{equation}
and
\begin{equation}\label{cafe2}
\Lie_X i_{\Gamma_\alpha}\d\eta_L^\alpha =i_{\Gamma_\alpha}\Lie_X \d\eta^{\alpha}_L -
i_{[\Gamma_{\alpha}, X]}\d\eta^{\alpha}_L = -i_{[\Gamma_{\alpha}, X]}\d\eta^{\alpha}_L\,.
\end{equation}
Then, from \eqref{cafe1} and \eqref{cafe2} we have
\begin{equation}
     \label{ms1}
  i_{[\Gamma_{\alpha}, X]}\d\eta^{\alpha}_L= - 
  \Lie_X\left(\parder{L}{s^\alpha}\right) \eta_L^\alpha\,.
\end{equation}
We will prove now that equation \eqref{ms1} implies that
$J^{\alpha}([\Gamma_{\alpha}, X])=0$ and hence $X$ is a Newtonoid vector
field for ${\bf \Gamma}$.
Using formula \eqref{ms0}, we have
\begin{equation}\label{ms2}
    [\Gamma_{\alpha},X] = V_{\alpha}^i\parder{}{q^i} + V^i_{\alpha\beta} \parder{}{v^i_{\beta}} + V_{\alpha}^\beta\parder{}{s^\beta}\,,
\end{equation}
where
$$ V^i_{\alpha}=\Gamma_{\alpha}(X^i)-X^i_{\alpha}\,,\quad V^i_{\alpha\beta}=\Gamma_{\alpha}(X^i_{\beta})-
X(\Gamma^i_{\alpha\beta})\,, \quad V^\beta_{\alpha}=\Gamma_{\alpha}(X^\beta)-X(\Gamma^\beta_\alpha)\,. $$
Using formula \eqref{eq:contact-forms}, it follows that the differential of the contact one-forms $\eta^\alpha_L$ can be written as follows 
\begin{equation}\label{ms3}
    \d\eta^{\alpha}_L=a^{\alpha}_{ij} \d q^i\wedge \d q^j + g^{\alpha\beta}_{ij} \d q^i \wedge \d v^j_{\beta}+ h^\alpha_{i\beta} \d q^i \wedge\d s^\beta\,,
\end{equation} 
where
$$ a^{\alpha}_{ij}= \frac{1}{2}\left( \frac{\partial^2 L}{\partial q^j \partial v^i_{\alpha}} - \frac{\partial^2 L}{\partial q^i \partial v^j_{\alpha}} \right)\,, \quad g^{\alpha\beta}_{ij} = \frac{\partial^2 L}{\partial v^i_{\alpha} \partial v^j_{\beta}}\,, \quad h^\alpha_{i\beta} = \parderr{L}{s^\beta}{v^i_\alpha}\,. $$
If we replace now formulae \eqref{ms2} and \eqref{ms3} in equation
\eqref{ms1} we obtain
$$ \left(2a^{\alpha}_{ij} V^j_{\alpha} -
  g^{\alpha\beta}_{ij}V^j_{\alpha\beta} - h^\alpha_{i\beta} V_\alpha^\beta\right) \d q^i + g^{\alpha
  \beta}_{ij} V^i_{\alpha} \d v^j_{\beta}+ h^\alpha_{i\beta} V^i_\alpha   \d s^\beta=     - 
  \Lie_X\left(\displaystyle\derpar{L}{s^\alpha} \right) \left(\d s^\alpha - \ds\frac{\partial L}{\partial v^i_\alpha} \d q^i\right)\,, $$
which implies that $g^{\alpha \beta}_{ij} V^i_{\alpha}=0$. Using the fact that the Lagrangian $L$
is regular it follows that $g^{\alpha \beta}_{ij} $ has maximal rank
and hence $V^i_{\alpha}=\Gamma_{\alpha}(X^i)-X^i_{\alpha}=0$, which shows
that $X$ is a Newtonoid vector field for the $k$-vector fields ${\bf\Gamma}$.
\end{proof}

\begin{corollary} Let $L\in \Cinfty(\oplus^k\T Q\times\R^k)$ be a regular Lagrangian function.
\begin{enumerate}
\item Consider a Newtonoid vector field $X\in \X(\oplus^k\T Q\times\R^k)$ of the form
\begin{equation}
    X = Z^C + K^\alpha \frac{\partial}{\partial s^\alpha}\,,\quad\text{with } Z\in \X(Q)\,,
\end{equation}
and such that $X(L) = 0$, and $K^\alpha$ are constants, for $\alpha=1,\ldots,k$. Then $X$ is an infinitesimal $k$-contact symmetry.
\item The functions $F^\alpha= -i_X \eta^\alpha_L = Z^{V_\alpha}(L) - K^\alpha$ give a dissipation law. 
\end{enumerate}
\end{corollary}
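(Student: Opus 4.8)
The plan is to reduce everything to part (1): once $X$ is known to be an infinitesimal Lagrangian $k$-contact symmetry, part (2) is immediate from Corollary \ref{cafe}, provided we also identify $-i_X\eta^\alpha_L$ with $Z^{V_\alpha}(L)-K^\alpha$. That identification is a one-line computation: since $\eta^\alpha_L=\d s^\alpha-\tfrac{\partial L}{\partial v^i_\alpha}\d q^i$ and $X=Z^C+K^\beta\tfrac{\partial}{\partial s^\beta}$, we have $i_X\eta^\alpha_L=X(s^\alpha)-\tfrac{\partial L}{\partial v^i_\alpha}X(q^i)=K^\alpha-Z^i\tfrac{\partial L}{\partial v^i_\alpha}=K^\alpha-Z^{V_\alpha}(L)$, using $Z^C(q^i)=Z^i$ and the local forms \eqref{lifts-vectors1}. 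Thus $F^\alpha=-i_X\eta^\alpha_L=Z^{V_\alpha}(L)-K^\alpha$ as claimed, and the whole corollary rests on verifying \eqref{last}, i.e. $\Lie_X\eta^\alpha_L=0$ and $\Lie_X E_L=0$.

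For the energy condition I would use $E_L=\Delta(L)-L$ together with $[X,\Delta]=0$. The bracket vanishes because $[Z^C,\Delta]=0$ (the complete lift commutes with the Liouville field, as both generate commuting flows, namely the tangent lift of $\phi_t$ and the fibre scaling of the Remark after \eqref{delta1}), while $[K^\beta\tfrac{\partial}{\partial s^\beta},\Delta]=0$ trivially since the $K^\beta$ are constant and $\Delta$ has no $s$-dependence; if preferred this is a two-line check from \eqref{lifts-vectors1} and \eqref{delta1}. Then $\Lie_X E_L=X(\Delta L)-X(L)=\Delta(X(L))+[X,\Delta](L)-X(L)=\Delta(0)+0-0=0$, invoking the hypothesis $X(L)=0$.

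The main work is $\Lie_X\eta^\alpha_L=0$, and this is the step I expect to be the real obstacle. Writing $\eta^\alpha_L$ in coordinates \eqref{eq:contact-forms} and using $\Lie_X\d s^\alpha=\d(K^\alpha)=0$ together with $\Lie_X\d q^i=\d(Z^i)=\tfrac{\partial Z^i}{\partial q^j}\d q^j$, one reduces $\Lie_X\eta^\alpha_L$ to a one-form proportional to $\d q^i$, so the assertion becomes the pointwise identity $X\!\left(\tfrac{\partial L}{\partial v^i_\alpha}\right)+\tfrac{\partial L}{\partial v^j_\alpha}\tfrac{\partial Z^j}{\partial q^i}=0$. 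To obtain it I would differentiate the hypothesis $X(L)=0$ with respect to $v^i_\alpha$; the only subtlety is that $\tfrac{\partial}{\partial v^i_\alpha}$ does not commute with $X$, and a short computation from \eqref{lifts-vectors1} gives $\big[\tfrac{\partial}{\partial v^i_\alpha},X\big]=\tfrac{\partial Z^k}{\partial q^i}\tfrac{\partial}{\partial v^k_\alpha}$. Hence $0=\tfrac{\partial}{\partial v^i_\alpha}\big(X(L)\big)=X\!\left(\tfrac{\partial L}{\partial v^i_\alpha}\right)+\tfrac{\partial Z^k}{\partial q^i}\tfrac{\partial L}{\partial v^k_\alpha}$, which is exactly the desired identity after relabelling the summation index. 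This mirrors the differentiation trick already used for natural symmetries, where $\tfrac{\partial}{\partial v^k_\beta}$ was applied to $Z^C(L)=0$.

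With both conditions in \eqref{last} verified, $X$ is an infinitesimal Lagrangian $k$-contact symmetry, proving (1); Corollary \ref{cafe} then yields that $F^\alpha=-i_X\eta^\alpha_L=Z^{V_\alpha}(L)-K^\alpha$ is a dissipation law, giving (2). I note that the Newtonoid hypothesis is not an additional constraint here: as observed after \eqref{xs2}, every field of the form $Z^C+X^\alpha\tfrac{\partial}{\partial s^\alpha}$ is automatically Newtonoid for any \textsc{sopde}, so the effective hypotheses driving the proof are only $X(L)=0$ and the constancy of the $K^\alpha$.
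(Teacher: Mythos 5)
Your proof is correct and follows the same route as the paper: verify the two conditions of \eqref{last} by a local-coordinate computation (the paper leaves this as a ``straightforward computation'') and then invoke Corollary \ref{cafe} together with the identification $-i_X\eta^\alpha_L=Z^{V_\alpha}(L)-K^\alpha$. Your filled-in details --- the commutator $[X,\Delta]=0$ for the energy condition, and differentiating $X(L)=0$ with respect to $v^i_\alpha$ via $\bigl[\partial/\partial v^i_\alpha,X\bigr]=\tfrac{\partial Z^j}{\partial q^i}\,\partial/\partial v^j_\alpha$ for $\Lie_X\eta^\alpha_L=0$ --- are all accurate, as is your observation that the Newtonoid hypothesis is automatic for fields of this form.
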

\begin{proof}
    By a straightforward computation in local coordinates, it can be proved that $X$ is an infinitesimal $k$-contact symmetry, that is, it satisfies equations \eqref{last}. Then, by Corollary \ref{cafe} the functions $F^\alpha$  give a dissipation law for the  $k$-contact symmetry $X$.
\end{proof}

\section{Conclusions and further research}

In this paper we have presented several types of symmetries of non-conservative Lagrangian field theories using the $k$-contact framework. In particular, we have introduced natural symmetries (symmetries of the Lagrangian function), dynamical symmetries (preserving the solutions of the field equations) and $k$-contact symmetries (preserving the geometric structures). We have studied the relations among these symmetries and how to obtain dissipation laws from them. This theory has been illustrated with several physical examples.

The study of the symmetries of $k$-contact systems in the Lagrangian setting presented in this paper is another step towards a deeper study of the symmetries of non-conservative field theories. In future works, this should be complemented with the analysis of the symmetries and Newtonoid vector fields in the Hamiltonian counterpart of the $k$-contact formalism. In particular, it would be interesting to compare both sets of symmetries in the case of singular Lagrangians, when the Legendre map is not a diffeomorphism. In addition, it would be very interesting to study the symmetries and Newtonoid vector fields of these systems and relate them to the symmetries in the $k$-contact setting presented in this work.

Recently, a more general geometric framework for non-conservative field theories (generalizing the $k$-contact \cite{Gaset2021b} and $k$-cocontact \cite{Riv-2022} formulations) has been introduced: the so-called \emph{multicontact formalism} \cite{LGMRR-2022}. We propose to analyze the relation between the $k$-contact and the $k$-cocontact formulations with the multicontact setting, both in the Lagrangian and Hamiltonian formalisms, following the ideas in \cite{relations}.

\section{Acknowledgments}

X. Rivas, M. Salgado and S. Souto acknowledge financial support of the Ministerio de Ciencia, Innovaci\'on y Universidades (Spain), projects PGC2018-098265-B-C33 and D2021-125515NB-21.

X. Rivas acknowledges financial support of the Novee Idee 2B-POB II project PSP: 501-D111-20-2004310 funded by the ``Inicjatywa Doskonałości - Uczelnia Badawcza'' (IDUB) program.

%We thank to the referees for their comments and suggestions.

%\input{backup-references.tex}

\bibliographystyle{abbrv}
\bibliography{references.bib}

\end{document}